\documentclass[11pt]{article}
\usepackage{graphicx}

\usepackage[top=1.2in, bottom=1.2in, left=1.2in, right=1.2in]{geometry}
\usepackage[qm]{qcircuit}

\usepackage{hyperref}
\hypersetup{
  colorlinks   = true, 
  urlcolor     = blue, 
  linkcolor    = blue, 
  citecolor   = red 
}

\DeclareEmphSequence{\bfseries\itshape}
\usepackage{amssymb,amsmath,url,etoolbox,multirow,multicol,enumerate}
\usepackage{graphicx,epstopdf,epsfig}
\usepackage{tikz-cd} 
\usepackage{inputenc}
\usepackage{amsmath, mathrsfs,  amssymb,amsbsy,amsfonts,amsthm,latexsym,amsopn,amstext, amsxtra,euscript,amscd}
\usepackage{physics}

\usepackage{makecell}
\usepackage{mathtools}
\usepackage{amsthm}
\usepackage{amsmath}
\usepackage{amssymb}
\usepackage{tikz}

\usetikzlibrary{arrows}

\newtheorem{theo}{Theorem}

\newtheorem{cor}[theo]{Corollary}
\newtheorem{lem}[theo]{Lemma}
\theoremstyle{definition}

\newtheorem{exa}[theo]{Example}

\newtheorem{rem}[theo]{Remark}

\newtheorem{prob}{Problem}

\numberwithin{theo}{section}

\newcommand{\rmv}[1]{}

\newcommand{\U}{{\mathbb U}}
\newcommand{\F}{{\mathbb F}}

\newcommand{\Z}{{\mathbb Z}}

\newcommand{\cC}{{\mathcal C}}

\newcommand{\cP}{{\mathcal P}}

\newcommand{\ba}{{\mathbf a}}
\newcommand{\bb}{{\mathbf b}}
\newcommand{\bc}{{\mathbf c}}
\newcommand{\bd}{{\mathbf d}}
\newcommand{\be}{{\mathbf e}}

\newcommand{\bv}{{\mathbf v}}

\newcommand{\bx}{{\mathbf x}}
\newcommand{\by}{{\mathbf y}}

\AtBeginDocument{}
\def\a{\alpha}

\renewcommand{\Res}{{\mathrm{\sf  Res}}}

\newcommand{\sfC}{{\mathrm{\sf C}}}
\newcommand{\sfS}{{\mathrm{\sf S}}}
\newcommand{\sfZ}{{\mathrm{\sf Z}}}

\newcommand{\Sp}{\mbox{\sf Sp}}

\renewcommand{\dim}{{\sf dim}}

\newcommand{\rk}{{\sf rank}}

\newcommand{\rs}{{\sf rs}}

\newcommand{\inners}[2]{\mbox{$\langle{\,{#1}\,}|\,{#2}\,\rangle_{\sf s}$}}

\newcommand{\GL}{\mathrm{\sf GL}}
\newcommand{\Sym}{\mathrm{\sf Sym}}

\def\be{\begin{equation}}
\def\ee{\end{equation}}
\def\matrix#1{\begin{bmatrix}#1\end{bmatrix}} 

\newcounter{alp}
\newcounter{ara}
\newcounter{rom}

\newenvironment{alphalist}{\begin{list}{(\alph{alp})\hfill}{\usecounter{alp}
     \topsep0.4ex \labelwidth.6cm \leftmargin.6cm \labelsep0cm
     \rightmargin0cm \parsep0ex \itemsep0ex}}{\end{list}}
\newenvironment{arabiclist}{\begin{list}{(\arabic{ara})\hfill}{\usecounter{ara}
     \topsep0.4ex \labelwidth.6cm \leftmargin.6cm \labelsep0cm
     \rightmargin0cm \parsep0ex \itemsep0ex}}{\end{list}}
\newif\ifcomment
\commenttrue

\title{Climbing the Clifford Hierarchy}
\date{\today} 

\newcommand*\samethanks[1][\value{footnote}]{\footnotemark[#1]}
\author{
Luca Bastioni\thanks{University of South Florida, Tampa, FL 33620. e-mails:\{lbastioni, tpllaha, phillipwaitkevich\} @usf.edu.}
\and Samuel Glandon\thanks{University of Tennessee at Chattanooga, Chattanooga, TN 37403. e-mail:nhg311@mocs.utc.edu.}
\and Tefjol Pllaha\samethanks[1]
\and Madison Stewart\thanks{University of Texas at Austin, Austin, TX 78712. e-mail: mstewart314@utexas.edu.}
\and Phillip Waitkevich\samethanks[1]
}

\begin{document}

\maketitle

{\bf Abstract:} The Clifford Hierarchy has been a central topic in quantum computation due to its strong connections with fault-tolerant quantum computation, magic state distillation, and more. 
Nevertheless, only sections of the hierarchy are fully understood, such as diagonal gates and third level gates. 
The diagonal part of the hierarchy can be climbed by taking square roots and adding controls. Similarly, square roots of Pauli gates (first level) are Clifford gates (climb to the second level). 
Based on this theme, we study gates whose square roots climb to the next level. In particular, we fully characterize Clifford gates whose square roots climb to the third level.

{\bf Keywords:} Clifford hierarchy, Pauli group, Clifford group, fault-tolerant computation


\section{Introduction and Motivation}
The Clifford hierarchy, introduced in 1999 by Gottesman and Chuang~\cite{Gottesman-nature99}, is a mathematical structure that plays a central role in fault-tolerant quantum-computation.
It is defined as the union of infinitely many nested {\em levels} $\cC^{(1)} \subset \cC^{(2)}\subset \cdots$, where the first level is the Pauli group $\cP$ and the $k$th level is comprised of all those gates that conjugate Pauli gates to the $(k-1)$st level.
Essentially, the hierarchy captures those quantum gates that can be implemented fault-tolerantly via the gate teleportation model and the level of the hierarchy quantifies the difficulty of said implementation.

The second level of the Clifford hierarchy coincides with the Clifford group.
Although the Clifford group can be efficiently simulated by a classical computer~\cite{gottesman98,AG04stabilziersimulation}, it is not sufficient for universal quantum computation.
However, adding any non-Clifford gate to the Clifford group does yield a universal gate set and usual non-Clifford contenders are taken from the third level of the hierarchy.

In light of its importance, the Clifford hierarchy has been the focus of many works over the years~\cite{ZXI08,Beigi-qic10,CGK17,PRTC20} as well as recent works~\cite{Anderson2024,HRT24,HRT25,Silva25}.
Despite all this attention, the full structure of the hierarchy remains unknown and only isolated fragments are known. 
The best understood gates in the hierarchy (other than Pauli and Clifford gates, of course) are diagonal gates within each level.
This understanding stems from the fact that they form a group~\cite{ZXI08}, and this paved the way for a full classification~\cite{CGK17}.
One of its distinguishing features is that one can ``climb'' the diagonal part of the hierarchy by starting with a diagonal gate from the level below and either adding a control or taking a square root~\cite{HLCclimbing}.
In this paper, we explore whether similar techniques can be generalized to explore the entire hierarchy.

Let $U$ be any Hermitian gate, and let us denote
\begin{equation*}\label{e-maindef}
\widehat{U} = {\sf exp}\left(i\frac{\pi}{4}U\right) = \frac{I + iU}{\sqrt{2}}.
\end{equation*}
One can easily verify that $\widehat{U}$ is unitary if and only if $U$ is Hermitian and $\widehat{U}^2 = iU$, thus constituting a\footnote{Everything discussed in this paper also holds for ${\sf exp}(-i\pi U/4) = (I-iU)/\sqrt{2}$.} square root of $U$.
As mentioned, if $P$ is a diagonal Pauli gate (that is, a first level gate) then $\widehat{P}$ is a Clifford gate (that is, a second level gate), and in general, if $U\in \cC^{(k)}$ is a controlled-$Z$ gate\footnote{Square roots of generic diagonal gates in the hierarchy still belong to the hierarchy but may climb multiple levels instead of one; see Example~\ref{nonexa} for instance.} 
then $\widehat{U}\in \cC^{(k+1)}$ (see also Theorem~\ref{T-diagonal}).
Additionally, it is well-known that if $P$ is a Hermitian Pauli gate then $\widehat{P}$ is a Clifford gate, and in fact, $\{\widehat{P}\mid P\in \cP = \cC^{(1)}\}$ generate the Clifford group $\cC^{(2)}$~\cite{RCKP20, PVT21}.
Unlike the diagonal gates, this behavior stops at the base of hierarchy, since, for instance, the Hadamard gate $H$ does not satisfy $\widehat{H}\in \cC^{(3)}$; see also Example~\ref{exa-hadamard}. 
This motivates the following.
\begin{prob}\label{Prob}
Let $U\in \cC^{(k)}$ be a Hermitian gate. Determine under what conditions $\widehat{U}$ ``climbs'' to the level above, that is $\widehat{U}\in \cC^{(k+1)}$. 
\end{prob}

To determine whether or not $\widehat{U}\in \cC^{(k+1)}$, one must examine the action of $\widehat{U}$ on a Pauli gate $P$, namely,
\begin{equation}\label{e-mainconj}
\widehat{U}P\widehat{U}^\dagger = \frac{1}{2}\left(P + iUP - iPU + UPU\right).
\end{equation}
In this paper, we examine~\eqref{e-mainconj} and determine cases and conditions for $\widehat{U}P\widehat{U}^\dagger \in \cC^{(k)}$, which in turn yields cases and conditions for $\widehat{U}\in \cC^{(k+1)}$.
Among other things, towards settling Problem~\ref{Prob}, we accomplish the following results:
\begin{arabiclist}
\item In Theorem~\ref{T-counter}, we stablish necessary conditions for Hermitian gates $U\in \cC^{(k)}$ such that $\widehat{U}\in \cC^{(k+1)}$.
\item Show that CNOT gates climb the hierarchy. Essentially, this is a consequence of the fact that controlled-$Z$ gates climb the hierarchy; see Remark~\ref{R-CNOT}. In Theorem~\ref{T-CNOT}, we also lay out relevant conjugations.
\item In Theorems~\ref{T-gen-climb} and~\ref{T-converse}, we fully characterize (Hermitian) Clifford gates that climb to the third level.
\item In Theorem~\ref{T-C4}, we establish conditions when Hermitian third level gates climb to the fourth level. 
\end{arabiclist}

The content of the paper is organized as follows. In Section~\ref{Sec2}, we set up the notation and list well-known facts about the hierarchy that will be referenced throughout the paper. 
In Section~\ref{Sec3}, we start by laying out some examples and non-examples, which in turn shed light to some necessary conditions for climbing the hierarchy.
In Section~\ref{Sec4}, we focus on Hermitian Clifford gates, and determine necessary and sufficient conditions for climbing to the third level. 
For this, we leverage connections with symplectic geometry and the powerful machinery therein.
In Section~\ref{Sec5}, we focus on Hermitian third level gates, and determine sufficient conditions for climbing to the forth level.
We end the paper with some conclusions and future directions in Section~\ref{Sec6}.


\section{Preliminaries}\label{Sec2}
The {\em Pauli gates} consist of the identity gate $I_2$, the bit-flip gate $X$, the phase-flip gate $Z$, and the bit-phase-flip gate $Y = iXZ$. Namely,
\begin{equation*}
I_2 = \matrix{1&0\\0&1},\, X = \matrix{0&1\\1&0},\,Z = \matrix{1&0\\0&-1}, \, Y = \matrix{0&-i\\i&0}. 
\end{equation*}
The {\em Pauli group on $1$ qubit} is the multiplicative group generated by the Pauli gates.
Namely,
\begin{equation*}
\cP_1 = \{\pm I_2,\pm iI_2, \pm X, \pm iX, \pm Z, \pm iZ,\pm Y,\pm iY\}.
\end{equation*}
Note that a Pauli gate $P\in \cP_1$ is of the form
\begin{equation*}
P = i^cX^aZ^b, \quad a,b\in \F_2, c\in \Z_4.   
\end{equation*}
Thus, we may denote a Pauli gate as $P(a,b;c)$. If $c=0$, the Pauli gate will be denoted $P(a,b)$.
The {\em Pauli group on $n$ qubits} is defined as $\cP_n = \cP_1^{\otimes n}$. That is, $P\in \cP_n$ is of the form $P = P_1\otimes P_2\otimes \cdots \otimes P_n$ where $P_j\in \cP_1$. If we write $P_j = P(a_j,b_j;c_j) = i^{c_j}X^{a_j}Z^{b_j}$, we have that
\begin{align*}
P & = P_1\otimes P_2\otimes \cdots \otimes P_n\\
& = i^{c_1}X^{a_1}Z^{b_1}\otimes i^{c_2}X^{a_2}Z^{b_2}\otimes \cdots \otimes i^{c_n}X^{a_n}Z^{b_n}\\
& = i^{c_1+\cdots+c_n}(X^{a_1}\otimes X^{a_2}\otimes \cdots \otimes X^{a_n})(Z^{b_1}\otimes Z^{b_2}\otimes \cdots \otimes Z^{b_n})\\
&\equiv P(\ba,\bb;c)
\end{align*}
where $\ba = (a_1,a_2,\ldots,a_n), \bb = (b_1,b_2,\ldots,b_n)\in \F_2^n$, $c = c_1+c_2+\cdots+c_n\in \Z_4$.
We will abuse notation and call $P\in \cP_n$ a Pauli gate as well.
Note that $E = P(\ba,\bb;c)\in \cP_n$ is Hermitian if and only if $c = \ba\bb^\top$. 
We will use this notation to distinguish between generic Pauli gates that satisfy $P^2 = \pm I$ and Hermitian Pauli gates that satisfy $E^2 = I$.
Finally, we will denote $X_i$ and $Z_i$ the Pauli basis with $X$ and $Z$ in qubit $i$, respectively, and identity otherwise.

It is well-known that Pauli gates either commute or anti-commute
\be\label{e-comm}
P(\ba,\bb)P(\bc,\bd) = (-1)^{\inners{(\ba,\bb)}{(\bc,\bd)}}P(\bc,\bd)P(\ba,\bb),
\ee 
where $\inners{(\ba,\bb)}{(\bc,\bd)} = \ba\bd^\top + \bb\bc^\top$ denotes the {\em symplectic inner product}.

The {\em Clifford hierarchy on $n$ qubits} is defined as a union of infinitely many levels
\begin{equation*}
\cC_n = \bigcup_{k = 1}^\infty \cC_n^{(k)},
\end{equation*}
where the first level is the Pauli group, that is, $\cC^{(1)}_n = \cP_n$ and higher levels are defined as $\cC_n^{(k)} = \{U\in \mathbb{U}(2^n)\mid U\cP_nU^\dagger \subset \cP_n\}.$ 
We will drop the subscript $n$ when it is irrelevant or clear from the context.
The second level $\cC^{(2)}_n$ is called the {\em Clifford group on $n$ qubits}.
We will refer to a gate from the second level as a {\em Clifford gate}. 
Let us denote
\begin{equation*}
S = \matrix{1&0\\0&i} \text{ and } H = \frac{1}{\sqrt{2}}\matrix{1&1\\1&-1}
\end{equation*}
the phase gate and the Hadamard gate respectively. 
It is well-known that these gates generate the Clifford group on one qubit.

Next, we list some well-known results on the Clifford hierarchy.
\begin{theo}[\cite{ZXI08}]\label{T-closed}
\begin{alphalist}
\item The levels of the Clifford hierarchy are closed under multiplication by Clifford matrices. That is, if $U\in \cC^{(k)}$ and $C_1,C_2\in \cC^{(2)}$ then $C_1UC_2\in \cC^{(k)}$.
\item Diagonal gates in each level form a group.
\end{alphalist}
\end{theo}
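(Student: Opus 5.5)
Note first that the paper's definition $\cC_n^{(k)} = \{U\mid U\cP_nU^\dagger\subset\cP_n\}$ is evidently a typo for $\{U\mid U\cP_nU^\dagger\subset\cC_n^{(k-1)}\}$; I work with the latter. For part (a) I plan an induction on $k$, handling left and right Clifford multiplication separately. The base case $k=1$ holds because the Pauli group is closed under products, so $C_1PC_2$ is Pauli for $C_1,C_2$ Pauli. For $k=2$ the statement is just that the Clifford group is a group, which follows from the definition since the normalizer of $\cP_n$ is closed under products and inverses.

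For the inductive step, assume the claim for level $k-1$ and let $U\in\cC^{(k)}$, $C\in\cC^{(2)}$, $P\in\cP_n$. For right multiplication,
\[
(UC)P(UC)^\dagger = U(CPC^\dagger)U^\dagger .
\]
Since $CPC^\dagger$ is again a Pauli gate, this lies in $\cC^{(k-1)}$, so $UC\in\cC^{(k)}$. For left multiplication,
\[
(CU)P(CU)^\dagger = C\,(UPU^\dagger)\,C^\dagger .
\]
Here $UPU^\dagger\in\cC^{(k-1)}$, and by the induction hypothesis (two-sided closure at level $k-1$ with $C_1=C$, $C_2=C^\dagger$) the product lies in $\cC^{(k-1)}$. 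Combining the two gives $C_1UC_2\in\cC^{(k)}$. The only subtlety is that the inductive hypothesis must be the two-sided statement, which is why I induct on the full claim.

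For part (b), let $\mathcal D_k$ denote the diagonal gates of $\cC^{(k)}$; I again induct on $k$. For $k\le2$ the claim is direct: diagonal Paulis are $i^cZ^{\bb}$, and diagonal Cliffords form a group because they lie in a group and the diagonal unitaries form a group. For the inductive step take $D,D'\in\mathcal D_k$ and $P=i^cX^{\ba}Z^{\bb}$. Since diagonal matrices commute with $Z^{\bb}$,
\[
DPD^\dagger = i^c\,DX^{\ba}D^\dagger\,Z^{\bb}.
\]
The key observation is that $DX^{\ba}D^\dagger = (DX^{\ba}D^\dagger X^{\ba})X^{\ba}$, and $DX^{\ba}D^\dagger X^{\ba}=D\,(X^{\ba}D^\dagger X^{\ba})$ is diagonal. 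So $DPD^\dagger$ is a diagonal gate $\Delta_{\ba}(D)$ times a Pauli. By part (a) at level $k-1$, $D\in\mathcal D_k$ is equivalent to $\Delta_{\ba}(D)\in\mathcal D_{k-1}$ for all $\ba$.

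Now compute $\Delta_{\ba}(DD') = D\,\Delta_{\ba}(D')\,X^{\ba}D^\dagger X^{\ba}$. Since all these factors are diagonal and commute, this equals $\Delta_{\ba}(D)\Delta_{\ba}(D')$, which lies in $\mathcal D_{k-1}$ by the induction hypothesis. Inverses are handled the same way, via $\Delta_{\ba}(D^\dagger)=\Delta_{\ba}(D)^\dagger$. The step I expect to need the most care is this reduction of membership to the diagonal factors $\Delta_{\ba}$, together with the fact that the phase $i^c$ and the Pauli factors are absorbed by part (a).
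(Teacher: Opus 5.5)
The paper does not prove this statement; it quotes it from \cite{ZXI08}. Your argument is the standard one, and it is correct apart from the base case of (a). In particular, the recursion in (b), where $\Delta_{\ba}(D)=DX^{\ba}D^\dagger X^{\ba}$ is diagonal and multiplicative in $D$, is essentially how the diagonal-group property is usually established. You are also right that the displayed definition of $\cC_n^{(k)}$ should read $U\cP_nU^\dagger\subset\cC_n^{(k-1)}$. I checked the identities $\Delta_{\ba}(DD')=\Delta_{\ba}(D)\Delta_{\ba}(D')$ and $\Delta_{\ba}(D^\dagger)=\Delta_{\ba}(D)^\dagger$, and both hold.

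The base case needs fixing. As stated, (a) is false for $k=1$: take $U=I$, $C_1=H$, $C_2=I$, so that $C_1UC_2=H\notin\cP_n$. Your ``$k=1$'' argument silently replaces the Clifford multipliers by Pauli ones, so it proves a different statement. Drop it and anchor the induction at $k=2$, which you already handle via the group property of $\cC^{(2)}$. The inductive step is then only used for $k\ge 3$, where the hypothesis at level $k-1\ge 2$ is legitimate.

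The left-multiplication step never needed full two-sided closure at level $k-1$. It only needs invariance under Clifford conjugation, $C\,\cC^{(k-1)}C^\dagger\subseteq\cC^{(k-1)}$, which at level $1$ is just the definition of a Clifford gate. It is right multiplication by a Clifford that fails at level $1$.

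The same remark settles (b). Your reduction ``$D\in\cC^{(k)}$ iff $\Delta_{\ba}(D)\in\cC^{(k-1)}$ for all $\ba$'' only uses right multiplication by the Pauli $i^cX^{\ba}Z^{\bb}$ at level $k-1$. That holds at level $1$ because $\cP_n$ is a group, and at levels $\ge 2$ by (a).
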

For $U\in\U(2^j)$, let $\sfC^{(i)}(U) \in \U(2^{i+j})$ denote a controlled-$U$ gate where the first $i$ qubits are control qubits and the last $j$ qubit are targets. 
Note that this arrangement is done without loss of generality since the target qubits can be changed with SWAP gates (that is, Clifford gates) and by Theorem~\ref{T-closed}, this does not affect the level of hierarchy.

The following result is well-known, which we list here for reference.
\begin{theo}\label{T-diagonal}
We have that $\sfC^{(k)}(Z) \in \cC^{(k+1)}$. More generally, we have that $\sfC^{(i)}(\!\sqrt[2^j]{Z})\in \cC^{(i+j+1)}$~\cite{ZXI08}, and these gates generate the group of diagonal gates in the respective level~\cite{CGK17}.
As a corollary, since $\sfC^{(k)}(X)$ and $\sfC^{(k)}(Z)$ only differ by Clifford gates (conjugation by the Hadamard gate), they belong to the same level of the hierarchy.
\end{theo}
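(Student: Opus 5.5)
The plan is to prove the three claims of Theorem~\ref{T-diagonal} in order: first $\sfC^{(k)}(Z)\in\cC^{(k+1)}$, then the root version, and finally the statement about $\sfC^{(k)}(X)$. The generation statement from~\cite{CGK17} is a classification result, and I will simply cite it rather than reprove it.

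\textbf{Step 1: $\sfC^{(k)}(Z)\in\cC^{(k+1)}$, by induction on $k$.} For $k=0$ the gate is $Z\in\cP=\cC^{(1)}$. For the inductive step I conjugate Pauli generators. Since $U=\sfC^{(k)}(Z)$ is diagonal, it commutes with every $Z_j$, so it suffices to handle each $X_j$. The gate $U$ is symmetric in all $k+1$ qubits; it acts as $(-1)^{x_1x_2\cdots x_{k+1}}$ on computational basis states. Hence
\[
UX_jU^\dagger = X_j\,\sfC^{(k-1)}(Z)_{\hat j},
\]
where $\sfC^{(k-1)}(Z)_{\hat j}$ is the multi-controlled $Z$ on the remaining $k$ qubits. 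This identity comes from flipping $x_j$ and comparing phases. By the induction hypothesis, this $\sfC^{(k-1)}(Z)$ lies in $\cC^{(k)}$. By Theorem~\ref{T-closed}(a), multiplying it by the Pauli (hence Clifford) $X_j$ keeps it in $\cC^{(k)}$.

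A general Pauli $P$ is, up to phase, a product of $X_j$'s and $Z_j$'s. Then $UPU^\dagger$ is a product of terms $X_j D_j$ with $D_j$ diagonal in $\cC^{(k)}$. To rearrange these, I will use that conjugating a diagonal gate by $X_j$ only permutes its diagonal entries, which corresponds to a relabeling that preserves the level. Together with the fact that diagonal gates in a level form a group (Theorem~\ref{T-closed}(b)), this lets me collect everything into the form $\pm(\text{Pauli})\cdot D$, with $D$ diagonal in $\cC^{(k)}$. Again by Theorem~\ref{T-closed}(a), this lies in $\cC^{(k)}$, so $U\in\cC^{(k+1)}$.

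\textbf{Step 2: $\sfC^{(i)}(\sqrt[2^j]{Z})\in\cC^{(i+j+1)}$.} I will use the same computation with $D=\sfC^{(i)}(Z^{1/2^j})$. We have
\[
DX_tD^\dagger = X_t\,\sfC^{(i)}(\mathrm{diag}(\omega^{-1},\omega)),\qquad \omega=e^{i\pi/2^j},
\]
for the target qubit $t$, and this equals $X_t\,\sfC^{(i)}(\omega^{-1}\sqrt[2^{j-1}]{Z})$. The scalar $\omega^{-1}$ under $i$ controls is itself $\sfC^{(i-1)}(\mathrm{diag}(1,\omega^{-1}))$, which is of lower level. For a control qubit $c$, we get
\[
DX_cD^\dagger = X_c\,\sfC^{(i-1)}(Z^{\pm1/2^j}),
\]
which has fewer controls. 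So I will induct on $i+j$, with base cases Step 1 ($j=0$) and $\sqrt[2^j]{Z}\in\cC^{(j+1)}$ ($i=0$); the latter is itself a single-qubit instance of the same computation. Each conjugate is then a Pauli times a product of diagonal gates in level $\le i+j$, hence in $\cC^{(i+j)}$ by Theorem~\ref{T-closed}.

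\textbf{Step 3: the corollary.} We have $\sfC^{(k)}(X)=(I\otimes H)\,\sfC^{(k)}(Z)\,(I\otimes H)$, so Theorem~\ref{T-closed}(a) gives equality of levels in both directions.

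\textbf{Main obstacle.} The delicate point is the bookkeeping in combining several conjugated generators. Diagonal gates in a level form a group, but products of non-diagonal gates do not. I therefore have to make sure every conjugate can be written as $(\text{Pauli})\cdot(\text{diagonal in level } k)$, using that $X$-conjugation permutes diagonal entries and preserves the diagonal level.
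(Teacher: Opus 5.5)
The paper does not prove the first two claims. It cites \cite{ZXI08} for $\sfC^{(i)}(\sqrt[2^j]{Z})\in\cC^{(i+j+1)}$ and \cite{CGK17} for generation, and its only argument is the Hadamard-conjugation remark, which is exactly your Step 3. Your Steps 1--2 are therefore a genuinely different, self-contained route. You run an induction in which each conjugate $UX_qU^\dagger$ is $X_q$ times diagonal gates one level down. The passage from generators to arbitrary Paulis then goes through Theorem~\ref{T-closed}: Pauli conjugation keeps a diagonal gate diagonal and in the same level, and diagonal gates in a level form a group. Hence every $UPU^\dagger$ is a Pauli times a diagonal gate of the level below. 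You are right that this bookkeeping is the real issue, since above $\cC^{(2)}$ checking generators is not enough. The paper only relies on generator-checking when the target level is the group $\cC^{(2)}$. Your route derives the result from Theorem~\ref{T-closed} alone, with no circularity because part (b) is an independent result. The citation buys brevity and the much deeper generation statement of \cite{CGK17}, which you rightly do not attempt.

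One computation in Step 2 is wrong, though it does not break the induction. For a control qubit $c$ you claim $DX_cD^\dagger = X_c\,\sfC^{(i-1)}(Z^{\pm 1/2^j})$, with ``fewer controls''. But $D=\sfC^{(i)}(\sqrt[2^j]{Z})$ acts by $\omega^{x_1\cdots x_{i+1}}$, so it is symmetric in all $i+1$ qubits, exactly as in your Step 1. There is no separate control case: for every qubit $q$,
\[
DX_qD^\dagger = X_q\cdot \sfC^{(i-1)}(\sqrt[2^j]{Z})_{\hat q}\cdot\big(\sfC^{(i)}(\sqrt[2^{j-1}]{Z})\big)^\dagger ,
\]
which acts by $\omega^{m_q(1-2x_q)}$ with $m_q=\prod_{l\neq q}x_l$. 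For instance, $\sfC(S)X_1\sfC(S)^\dagger=(X\otimes S)\,\sfC(Z)$, not $X\otimes S^{\pm1}$. The extra factor $\sfC^{(i)}(\sqrt[2^{j-1}]{Z})^\dagger$ lies in $\cC^{(i+j)}$ by your induction on $i+j$, with inverses supplied by the group property. So the conclusion survives once you replace the control-qubit formula by this one. Your target formula also has $X_t$ on the wrong side: with $X_t$ on the left, the controlled diagonal is $\mathrm{diag}(\omega,\omega^{-1})$. This is harmless for the level count.
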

\begin{cor}\label{C-diag}
Since $S,S^\dagger$ are square roots of $Z$, we have that $\sfC^{(k)}(S),\sfC^{(k)}(S^\dagger)\in \cC^{(k+2)}$.
\end{cor}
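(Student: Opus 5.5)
The plan is to reduce everything to Theorem~\ref{T-diagonal}, which gives $\sfC^{(i)}(\!\sqrt[2^j]{Z})\in \cC^{(i+j+1)}$. First we identify $S$ and $S^\dagger$ as suitable roots of $Z$. Directly from the matrices, $S=\mathrm{diag}(1,i)=\sqrt{Z}$ is exactly the gate $\sqrt[2^1]{Z}$ in the family considered there. Taking $i=k$ and $j=1$ then gives $\sfC^{(k)}(S)\in\cC^{(k+1+1)}=\cC^{(k+2)}$ at once.

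For $S^\dagger=\mathrm{diag}(1,-i)$, the other diagonal square root of $Z$, we would not invoke the theorem a second time. Instead we use Theorem~\ref{T-closed}(b), which says the diagonal gates of each level form a group. Here $\sfC^{(k)}(S)$ is diagonal and lies in $\cC^{(k+2)}$. Moreover, $\sfC^{(k)}(S^\dagger)=\sfC^{(k)}(S)^\dagger=\sfC^{(k)}(S)^{-1}$, because controlling commutes with taking adjoints: $\sfC^{(k)}(U)=\left(I-\Pi\right)\otimes I+\Pi\otimes U$, where $\Pi$ is the projector onto $|1\cdots1\rangle$. Closure of the diagonal part of $\cC^{(k+2)}$ under inverses then gives $\sfC^{(k)}(S^\dagger)\in\cC^{(k+2)}$.

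As an independent check, the same conclusion follows from $\sfC^{(k)}(S^\dagger)=\sfC^{(k)}(S)^3$, using closure under products. One could also read the theorem's $\sqrt[2^j]{Z}$ as covering any diagonal $2^j$-th root of $Z$, up to the group structure of the level.

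The only point needing care is this interpretation: which root of $Z$ Theorem~\ref{T-diagonal} means, and that $S^\dagger$ is covered. Routing $S^\dagger$ through the group property of Theorem~\ref{T-closed}(b) sidesteps that ambiguity. Everything else is immediate.
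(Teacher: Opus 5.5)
Your proof is correct and follows the paper, which gets the corollary directly from Theorem~\ref{T-diagonal} with $i=k$ and $j=1$, treating both $S$ and $S^\dagger$ as square roots of $Z$. For $S^\dagger$ you instead use $\sfC^{(k)}(S^\dagger)=\sfC^{(k)}(S)^\dagger$ together with the group property of diagonal gates in Theorem~\ref{T-closed}(b), which is a valid way to make explicit why the second root is covered, a point the paper leaves implicit.
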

\begin{cor}\label{C-CP}
Let $P\in \cC^{(1)}$ be a Pauli gate. Then $\sfC(P)\in \cC^{(2)}$.
\end{cor}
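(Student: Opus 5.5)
The plan is to reduce to the basic controlled gates of Theorem~\ref{T-diagonal}, using Theorem~\ref{T-closed}(a) to remove Clifford factors. Write $P = i^c X^{\ba}Z^{\bb}$ on $j$ target qubits. The phase is handled first. A controlled-$i^c I$ gate is just $S^c$ acting on the single control qubit, tensored with the identity on the targets. Hence
\[
\sfC(P) = \sfC(X^{\ba}Z^{\bb})\,(S^c\otimes I),
\]
and $S^c\otimes I$ is Clifford. By Theorem~\ref{T-closed}(a) it therefore suffices to treat $c=0$.

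Next, the controlled Pauli string factors as a product of single-target controlled gates that share the one control:
\[
\sfC(X^{\ba}Z^{\bb}) = \prod_{\ell:\,a_\ell=1}\sfC_{\ell}(X)\prod_{\ell:\,b_\ell=1}\sfC_{\ell}(Z).
\]
Here $\sfC_\ell(\cdot)$ denotes the gate controlled by the first qubit and targeting qubit $\ell$. This identity holds because, on the control value $1$, the product acts as $X^{\ba}Z^{\bb}$, and on the control value $0$, every factor acts as the identity.

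Each factor is Clifford. By Theorem~\ref{T-diagonal} with $k=1$, we have $\sfC(Z)\in\cC^{(2)}$, and $\sfC(X)$ differs from it by Hadamard conjugation on the target, so it is also Clifford. Moving a factor to a different target qubit costs only SWAP gates, which are Clifford. Since the Clifford group $\cC^{(2)}$ is closed under products, a special case of Theorem~\ref{T-closed}(a), the whole product is Clifford. This gives $\sfC(P)\in\cC^{(2)}$.

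There is an alternative, more direct route: verify that $\sfC(P)$ conjugates generators of the Pauli group to Pauli gates. A target Pauli $Q$ is sent to $Q$ tensored with $I$ or $Z$ on the control, according to whether $Q$ commutes or anticommutes with $P$. The control $Z_1$ is fixed. The control $X_1$ is sent to $X_1\otimes P$, up to the phase convention. I expect the only delicate point is the phase bookkeeping when $c\neq 0$, namely making sure that a controlled-$iI$ gate really is the Clifford gate $S$ on the control rather than something outside the hierarchy. The first step already isolates and settles exactly this issue.
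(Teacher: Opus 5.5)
Your argument is correct and is essentially the paper's route: the paper states this corollary without proof as a consequence of Theorem~\ref{T-diagonal} (with $k=1$, $\sfC(Z)\in\cC^{(2)}$, and $\sfC(X)$ is its Hadamard conjugate). Your reduction of the phase $i^c$ to $S^c$ on the control, together with the factorization into single-target controlled $X$'s and $Z$'s, supplies the bookkeeping the paper leaves implicit. One small slip in your side remark: $\sfC(P)$ sends $X_1$ to $|0\rangle\langle 1|\otimes P^\dagger+|1\rangle\langle 0|\otimes P$. This is $X\otimes P$ when $P^2=I$, but $XZ\otimes P$ (not a phase multiple of $X\otimes P$) when $P^2=-I$; it is a Pauli gate in either case.
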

\section{Some examples and non-examples}\label{Sec3}
Before we describe classes of matrices $U\in \cC_n^{(k)}$ for which $\widehat{U} \in \cC_n^{(k+1)}$, let us start with an interesting non-example, which in turn will motivate a necessary condition for $\widehat{U}$ to climb the Clifford hierarchy.
\begin{exa}\label{exa-hadamard}
Consider the Hadamard gate $H\in \cC_1^{(2)}$. Then, one can easily verify that $\widehat{H}\notin \cC_1^{(3)}$.
As we will see below, this is due to the fact that $HXH^\dagger = Z$ and $X,Z$ anti-commute.
Perhaps the more interesting fact is that $\widehat{H}$ does not belong to the Clifford hierarchy at all.
Indeed, recall that $H = (X+Z)/\sqrt{2}$. Thus $\widehat{H} = (\sqrt{2}I+iX+iZ)/2$.
Then, one computes
$\widehat{H}X\widehat{H}^\dag = -\widehat{H} Y$. If $\widehat{H}$ belonged to some finite level of the hierarchy, then by definition, $\widehat{H}X\widehat{H}^\dagger$ would belong to the level below. This is a contradiction, since $\widehat{H}X\widehat{H}^\dagger$ differs from $\widehat{H}$ only by $-Y$, and by Theorem~\ref{T-closed}(a) they would belong to the same level.
\end{exa}

In general, we have the following.
\begin{lem}\label{L-cliff}
Let $E,E'\in \cP_n$ be two anti-commuting Hermitian Pauli gates. Then $C = (E+E')/\sqrt{2}$ is a Clifford gate.
\end{lem}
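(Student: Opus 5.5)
The plan is to check two things directly: that $C$ is unitary, and that conjugation by $C$ maps every Pauli gate to a Pauli gate. I would first establish unitarity. Since $E,E'$ are Hermitian, $C$ is Hermitian. Using $E^2=E'^2=I$ and $EE'=-E'E$, we get
\begin{equation*}
C^2=\tfrac12\left(E^2+EE'+E'E+E'^2\right)=\tfrac12(2I)=I .
\end{equation*}
Hence $C^\dagger C = C^2 = I$, so $C$ is a unitary involution with $C^\dagger=C$.

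The core step is to compute $CPC^\dagger=CPC$ for an arbitrary Pauli gate $P\in\cP_n$. Since $P$ either commutes or anti-commutes with each of $E$ and $E'$ by \eqref{e-comm}, I would split into the four sign cases. Write $EP=\epsilon PE$ and $E'P=\epsilon' PE'$ with $\epsilon,\epsilon'\in\{\pm1\}$. Then
\begin{equation*}
CPC=\tfrac12\left(EPE+EPE'+E'PE+E'PE'\right)=\tfrac12\left(\epsilon P+\epsilon' PEE'+\epsilon PE'E+\epsilon' P\right),
\end{equation*}
where I move $P$ to the left in each term and use $E^2=E'^2=I$. Now substitute $E'E=-EE'$:
\begin{equation*}
CPC=\tfrac12\left((\epsilon+\epsilon')P+(\epsilon'-\epsilon)PEE'\right).
\end{equation*}
This splits cleanly into two cases. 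If $\epsilon=\epsilon'$, then $CPC=\epsilon P$. If $\epsilon=-\epsilon'$, then $CPC=\epsilon' PEE'$. In both cases the result is a product of Pauli gates up to sign, so it lies in $\cP_n$.

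The one point that needs care is the reordering of the two middle terms: one must commute $P$ past the correct factor and then invoke the anti-commutation of $E$ and $E'$. I expect this bookkeeping, rather than anything conceptual, to be the only delicate step. With both cases handled, $C\cP_nC^\dagger\subset\cP_n$, and therefore $C\in\cC^{(2)}$. As a sanity check, taking $E=X$ and $E'=Z$ recovers $C=H$, consistent with $H=(X+Z)/\sqrt2$ in Example~\ref{exa-hadamard}.
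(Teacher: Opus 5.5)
Your route is the same as the paper's: establish unitarity, then split into the four commute/anticommute cases to obtain the conjugation table \eqref{e-cliff}. Your unitarity step is more careful than the paper's one-line remark, since Hermiticity alone does not give unitarity; you correctly use $EE'=-E'E$ to get $C^2=I$.

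There is, however, a sign slip in the two middle terms. Moving $P$ to the left past the left-hand factor gives $EPE'=\epsilon PEE'$ and $E'PE=\epsilon' PE'E$; you have $\epsilon$ and $\epsilon'$ swapped. The correct expansion is $CPC=\tfrac12\bigl((\epsilon+\epsilon')P+(\epsilon-\epsilon')PEE'\bigr)$, so in the mixed case $CPC=\epsilon PEE'$, not $\epsilon' PEE'$.

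Your own sanity check would have caught this. With $E=X$, $E'=Z$, $P=X$ one has $HXH=Z=X\cdot XZ$, whereas your formula gives $-Z$.

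The lemma itself is unaffected, since $\pm PEE'\in\cP_n$ either way. But only the corrected formula agrees with \eqref{e-cliff}, which the paper cites later in Example~\ref{exa-hadamard1}.
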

\begin{proof}
Since $E,E'$ are Hermitian, we have that $C$ is indeed unitary.
Additionally, if $P\in \cP_n$ is any Pauli gate, it is straightforward to verify that
\be\label{e-cliff}
CPC^\dagger = 
\begin{cases}
P, & \text{if }EP = PE\text{ and } E'P=PE',\\
PEE', & \text{if }EP = PE\text{ and } E'P=-PE',\\
-PEE', & \text{if }EP = -PE\text{ and } E'P=PE',\\
-P, & \text{if }EP = -PE\text{ and } E'P=-PE',
\end{cases}
\ee
and this concludes the proof.
\end{proof}
\begin{theo}\label{T-counter}
Let $U\in \cC_n^{(k)}$ be a Hermitian gate such that $UEU = E'$ for some anti-commuting Hermitian Pauli gates $E,E'\in \cP_n$. 
Then $\widehat{U}\notin \cC_n^{(k+1)}$.
\end{theo}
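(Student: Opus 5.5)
The plan is to imitate Example~\ref{exa-hadamard}: show that $\widehat{U}E\widehat{U}^\dagger$ differs from $\widehat{U}$ only by multiplication with Clifford gates, and then run a descent on the levels.

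\emph{Step 1 (algebraic identities).} Since $U$ lies in the hierarchy it is unitary, and it is Hermitian by hypothesis, so $U^2=I$. From $UEU=E'$ we get $UE=E'U$ and $EU=UE'$. By~\eqref{e-mainconj},
\begin{equation*}
\widehat{U}E\widehat{U}^\dagger=\tfrac12\left(E+E'+i(UE-EU)\right)=\tfrac12\left(E+E'+i(E'-E)U\right).
\end{equation*}
Set $C=(E+E')/\sqrt{2}$, which is a Clifford gate by Lemma~\ref{L-cliff}. It is Hermitian and unitary, so $C^{-1}=C$. Anti-commutation gives $(E+E')(E'-E)=2EE'$. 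Hence
\begin{equation*}
\widehat{U}E\widehat{U}^\dagger=\tfrac{1}{\sqrt2}\,C\,(I+iEE'U).
\end{equation*}
Moreover $EE'U=E\,UEU\,U=EUE$. Using $E^2=I$, we obtain
\begin{equation*}
\widehat{U}E\widehat{U}^\dagger=\tfrac{1}{\sqrt2}\,C\,E(I+iU)E=CE\,\widehat{U}\,E.
\end{equation*}
I expect finding this factorization to be the main obstacle. Everything else is bookkeeping, and the check above should be verified carefully, in particular the sign coming from $E'E=-EE'$.

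\emph{Step 2 (descent).} Suppose $\widehat{U}\in\cC^{(m)}$ for some $m\ge 2$. Then $\widehat{U}E\widehat{U}^\dagger\in\cC^{(m-1)}$ by definition of the levels. Since $\widehat{U}=EC\,(\widehat{U}E\widehat{U}^\dagger)\,E$ with $EC,E$ Clifford, Theorem~\ref{T-closed}(a) gives $\widehat{U}\in\cC^{(m-1)}$. Assuming $\widehat{U}\in\cC^{(k+1)}$ and iterating, we conclude $\widehat{U}\in\cC^{(1)}=\cP_n$. In fact this shows more: $\widehat{U}$ is in no level of the hierarchy at all unless it is Pauli.

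\emph{Step 3 (contradiction).} If $\widehat{U}$ is a Pauli gate, then $\widehat{U}^2=iU$ is a Pauli gate, so $U$ is a scalar multiple of a Pauli gate. Conjugating by a Pauli gate sends $E$ to $\pm E$, so $UEU=UEU^\dagger=\pm E$. This cannot equal $E'$, because $E'$ anti-commutes with $E$ and so $E'\neq\pm E$. Therefore $\widehat{U}\notin\cC_n^{(k+1)}$.
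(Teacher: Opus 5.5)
Your Step 1 identity $\widehat{U}E\widehat{U}^\dagger = CE\,\widehat{U}\,E$ is correct. It is in fact slightly better than the paper's factorization $E\,\widehat{U}^\dagger(I+EE')/\sqrt{2}$: because $\widehat{U}$ itself, not $\widehat{U}^\dagger$, sits in the middle, you do not need the paper's unproved assertion that $\widehat{U}$ and $\widehat{U}^\dagger$ lie in the same level.

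The gap is in the last step of your descent. Theorem~\ref{T-closed}(a) holds only for levels $k\ge 2$. The first level $\cC^{(1)}=\cP_n$ is not closed under multiplication by Clifford gates (e.g.\ $H\cdot I\cdot I=H$). So at $m=2$, knowing $\widehat{U}E\widehat{U}^\dagger\in\cP_n$ only gives $\widehat{U}=EC\,Q$ with $Q=(\widehat{U}E\widehat{U}^\dagger)E\in\cP_n$. Here $EC=(I+EE')/\sqrt{2}$ is a non-Pauli Clifford transvection, so you cannot conclude $\widehat{U}\in\cP_n$. Your descent therefore only reaches $\widehat{U}\in\cC^{(2)}$. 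Step 3 rules out only the Pauli case, so the case $\widehat{U}\in\cC^{(2)}\setminus\cP_n$ is left open.

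This case is easy to close, in either of two ways.

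\emph{Continue from your identity.} Start from $\widehat{U}=(I+EE')Q/\sqrt{2}$. The Pauli $Q$ satisfies $Q(I+EE')=(I+\epsilon EE')Q$ for some $\epsilon=\pm1$, and $Q^2=\pm I$. Hence $iU=\widehat{U}^2=\tfrac12(I+EE')(I+\epsilon EE')Q^2$. Using $(EE')^2=-I$, this equals $\pm EE'$ if $\epsilon=1$ and $\pm I$ if $\epsilon=-1$. So $U$ is a scalar multiple of a Pauli gate, and your Step 3 contradiction applies verbatim.

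\emph{Use a trace argument, as in Theorem~\ref{T-converse}.} From $\widehat{U}E\widehat{U}^\dagger=\tfrac12(E+E')+\tfrac{i}{2}(E'-E)U$, together with ${\sf Tr}(EE')=0$ and ${\sf Tr}(EE'U)={\sf Tr}(EUE)={\sf Tr}(U)$, you get ${\sf Tr}(E\,\widehat{U}E\widehat{U}^\dagger)=2^{n-1}$. But $|{\sf Tr}(EP)|\in\{0,2^n\}$ for every $P\in\cP_n$, so $\widehat{U}E\widehat{U}^\dagger\notin\cP_n$ and $\widehat{U}\notin\cC^{(2)}$.

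With this base case in place, your descent for $m\ge 3$ proves the theorem. Your stronger remark then also becomes correct: $\widehat{U}$ lies in no level of the hierarchy at all.
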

\begin{proof}
We have that 
\begin{align*}
\widehat{U}E\widehat{U}^\dagger &= \frac{1}{2}\left(E + iUE - iEU + UEU^\dagger\right)\\
& = \frac{1}{2}E\left(I + iEUE - iU + EE'\right)\\
& = \frac{1}{2}E\left(I - iUEE' - iU + EE'\right)\\
& = E \cdot \frac{I-iU}{\sqrt{2}} \cdot \frac{I+EE'}{\sqrt{2}},
\end{align*}
where the before last equality follows from the fact that $EU = UE'$ and $EE' = -E'E$.
By Lemma~\ref{L-cliff}, we have that $(I+EE')/\sqrt{2} = E(E+E')/\sqrt{2}\in \cC_n^{(2)}$ and by Theorem~\ref{T-closed}, we have that $\widehat{U}E\widehat{U}^\dagger$ and $\widehat{U}$ belong to the same level (since $\widehat{U}^\dagger = (I-iU)\sqrt{2}$ and $\widehat{U}$ belong to the same level) and thus $\widehat{U}$ cannot be in $\cC_n^{(k+1)}$.
\end{proof}
\begin{exa}\label{exa-hadamard1}
Let $E,E'\in\cP_n$ be two anti-commuting Hermitian Pauli gates so that $C = (E+E')/\sqrt{2}\in \cC_n^{(2)}$. 
By~\eqref{e-cliff}, we see that $C$ satisfies the conditions of Theorem~\ref{T-counter}, and thus $\widehat{C}\notin\cC_n^{(3)}$. Note that this observation also addresses Example~\ref{exa-hadamard} since $H = (X+Z)/\sqrt{2}$ and $XZ = -ZX$.
\end{exa}

Let us now return to some examples for which taking the square root does climb the Clifford hierarchy.
\begin{exa}[Lifting CNOT]
Let $C = \sfC(X)$. Since $C$ fixes $X_2$ and $Z_1$ under conjugation, the same is true for $\widehat{C}$. Next, let $R = HS^\dagger H\in \cC^{(2)}$. 
One straightforwardly verifies that
\begin{align*}
\widehat{C}X_1\widehat{C}^\dagger &= X\otimes R\cdot \sfC(X),\\
\widehat{C}Z_2\widehat{C}^\dagger & = S^{\dag}\otimes Z\cdot \sfC(X).
\end{align*}
Combining Corollary~\ref{C-diag} with Theorem~\ref{T-closed}, we have that $\sfC(R)\in \cC^{(3)}$. 
Of course, we have that $X\otimes R, \,\sfC(X)\in \cC^{(2)}$. 
Thus $\widehat{C}$ conjugates the Pauli basis into Cliffords. 
Since $\cC^{(2)}$ is a group, this is sufficient to show that $\widehat{C}\in \cC^{(3)}$.
\end{exa}
\begin{exa}[Lifting SWAP]\label{exa-swap}
Let $\sfS$ denote the SWAP gate in two qubits. We then have
\begin{align*}
\widehat{\sfS}X_1\widehat{\sfS}^\dagger &= \sfS\cdot H\otimes H\cdot \sfC(Z)\cdot S \otimes S \cdot H\otimes H\cdot X_1,\\
\widehat{\sfS}X_2\widehat{\sfS}^\dagger &= \sfS\cdot H\otimes H\cdot \sfC(Z)\cdot S \otimes S \cdot H\otimes H\cdot X_2,\\
\widehat{\sfS}Z_1\widehat{\sfS}^\dagger &= \sfS\cdot \sfC(Z)\cdot S\otimes S \cdot Z_1,\\
\widehat{\sfS}Z_2\widehat{\sfS}^\dagger &= \sfS\cdot \sfC(Z)\cdot S\otimes S \cdot Z_2.
\end{align*}
We conclude that $\widehat{\sfS}\in \cC^{(3)}$.
\end{exa}
As mentioned in Theorem~\ref{T-diagonal}, if we let $U = \sfC^{(k)}(Z)\in \cC^{(k+1)}$ we indeed have that $\widehat{U}\in \cC^{(k+2)}$ does climb the hierarchy.
This is due to the facts that 
\begin{equation*}
\widehat{Z} = \frac{1+i}{\sqrt{2}}S^\dagger \text{ and } \widehat{U} = \frac{1+i}{\sqrt{2}}\sfC^{(k)}(S^\dagger).
\end{equation*}
As a consequence, since $X = HZH$, due to Theorem~\ref{T-closed}, the same is true for $U = \sfC^{(k)}(X)$.
\begin{rem}[$\widehat{\sfC^{(k)}(X)}$ climbs the hierarchy]\label{R-CNOT}
Consider again  the gate 
\begin{equation}\label{e-Rgate}
R = HS^\dagger H = \frac{1}{2}\matrix{1-i&1+i\\1+i&1-i}\in \cC^{(2)}.
\end{equation}
Then, by Theorem~\ref{T-diagonal}, we have that $\sfC^{(k)}(R)\in \cC^{(k+2)}$. Next, note that 
\begin{equation*}
\widehat{\sfC^{(k)}(X)} = \frac{1+i}{\sqrt{2}}\sfC^{(k)}(R),
\end{equation*}
which then again implies $\widehat{\sfC^{(k)}(X)}\in \cC^{(k+2)}$.
\end{rem}
\begin{theo}\label{T-CNOT}
Let $C = \sfC^{(k)}(X)$. Then, if we denote $\sfS_{1,i}$ to be the gate that swaps qubits 1 and $i$ (in $k+1$ qubits), the following conjugation rules hold.
\begin{align*}
\widehat{C}X_1\widehat{C}^\dagger & = X\otimes \sfC^{(k-1)}(R)\cdot C \equiv M_1,\\
\widehat{C}X_i\widehat{C}^\dagger & = \sfS_{1,i} M_1 \sfS_{1,i}^\dagger, \text{ for } i = 1,\ldots,k,\\
\widehat{C}X_{k+1}\widehat{C}^\dagger & =X_{k+1},\\
\widehat{C}Z_i\widehat{C}^\dagger & = Z_i, \text{ for } i = 1,\ldots,k,\\
\widehat{C}Z_{k+1}\widehat{C}^\dagger & =\sfC^{(k-1)}(S)\otimes Z\cdot C.
\end{align*}
\end{theo}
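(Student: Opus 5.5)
The plan is to use the closed form from Remark~\ref{R-CNOT}: $\widehat{C} = \frac{1+i}{\sqrt{2}}\sfC^{(k)}(R)$ with $R = HS^\dagger H$. Since global phases cancel under conjugation, it suffices to compute $\sfC^{(k)}(R)\,P\,\sfC^{(k)}(R)^\dagger$ for each Pauli basis element $P$. Write $\Pi = |1\cdots1\rangle\langle 1\cdots 1|$ for the projector onto the all-ones control string. Then
\[
\sfC^{(k)}(R) = (I-\Pi)\otimes I + \Pi\otimes R,
\]
and $C = (I-\Pi)\otimes I + \Pi\otimes X$. Equivalently, $\sfC^{(k)}(R) = \frac{1}{2}\bigl((1-i)I + (1+i)C\bigr)$, which follows from $R = \frac{1}{2}\bigl((1-i)I + (1+i)X\bigr)$. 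We then treat the generators one at a time.

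First, the easy cases. The Paulis $Z_i$ for $i\le k$ and $X_{k+1}$ commute with $C$. Hence they commute with $\widehat{C} = (I+iC)/\sqrt{2}$, and they are fixed. Next, take the target $Z_{k+1}$. It commutes with $\Pi$ and anticommutes with $X$ on the target. Therefore
\[
\sfC^{(k)}(R)\, Z_{k+1}\, \sfC^{(k)}(R)^\dagger = (I-\Pi)\otimes Z + \Pi\otimes RZR^\dagger .
\]
Using $RZR^\dagger = RZ R^\dagger$ with $R$ commuting with $X$, we get $RZR^\dagger = R^2 Z = (HS^\dagger H)^2 Z = HZH\,Z = XZ$, up to the phase $-i$ from $S^{\dagger 2} = Z$; concretely $R^2 = -iX$. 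So the $\Pi$-block is $-iXZ$, while the other block is $Z$. Factoring out $C$ on the right, the product becomes $\bigl((I-\Pi) + (-i)\Pi\bigr)\otimes Z\cdot C$. The diagonal factor $(I-\Pi) - i\Pi$ on the controls is $\sfC^{(k-1)}(S^\dagger)$, or $\sfC^{(k-1)}(S)$ depending on the ordering and sign conventions. Matching this against the stated $\sfC^{(k-1)}(S)\otimes Z\cdot C$ needs care with the order of $Z$ and $X$ and with which square root convention is used.

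Second, the control $X_1$; the other controls then follow by conjugating with $\sfS_{1,i}$, since $C$ is symmetric in its controls. Here $X_1$ maps $\Pi$ to $\Pi' = |0 1\cdots1\rangle\langle 01\cdots1|$, so
\[
\sfC^{(k)}(R) X_1 \sfC^{(k)}(R)^\dagger = X_1\,\bigl((I-\Pi')\otimes I+\Pi'\otimes R\bigr)\bigl((I-\Pi)\otimes I+\Pi\otimes R^\dagger\bigr).
\]
Since $\Pi\Pi' = 0$, this equals $X_1\bigl(I + \Pi'\otimes(R-I) + \Pi\otimes(R^\dagger - I)\bigr)$. Now use $R^\dagger = R^{-1} = R\cdot(R^\dagger)^2 = R\cdot iX$. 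This lets us rewrite the operator as $X_1\cdot\bigl[(I-\Pi-\Pi') + (\Pi+\Pi')\otimes R\bigr]$ times a factor that applies $X$ on the target exactly when the control string is $\Pi$, up to phases. The bracket is $I\otimes\sfC^{(k-1)}(R)$, controlled on qubits $2,\dots,k$. The extra factor is $C$, possibly with a phase, and that phase must be absorbed by checking $i\Pi\otimes X$ against the form of $R$. The result is $X\otimes\sfC^{(k-1)}(R)\cdot C$.

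The main obstacle is this phase and sign bookkeeping. Specifically, we must confirm that the leftover factors in both nontrivial cases assemble exactly into $C$ and into $\sfC^{(k-1)}(S)$ rather than $\sfC^{(k-1)}(S^\dagger)$. The quickest way to settle this is to verify the $k=1$ case by direct $4\times4$ multiplication and compare with the lifting CNOT example. That comparison pins down the conventions, including the fact that the $k=1$ example records $S^\dagger$. The general case then follows blockwise in $\Pi$, since every operator involved is block diagonal with respect to the control projectors.
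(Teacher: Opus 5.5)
Your overall strategy (control projectors, $\widehat{C}=\frac{1+i}{\sqrt2}\sfC^{(k)}(R)$, and an identity relating $R^\dagger$ to $RX$) is sound, and once fixed it is essentially the paper's block computation with respect to qubit~1. The problem is that the identity you actually use is false: $R^2 = HS^{\dagger 2}H = HZH = X$, with no phase. You yourself note $S^{\dagger 2}=Z$ and then assert $R^2=-iX$.

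In the $X_1$ case this is not harmless bookkeeping. With your $(R^\dagger)^2 = iX$, the leftover factor is $(I-\Pi)\otimes I + i\,\Pi\otimes X = (\sfC^{(k-1)}(S)\otimes I)\cdot C$. This differs from $C$ by a \emph{relative} phase on the $\Pi$-block, so it cannot be ``absorbed.'' Your stated conclusion therefore does not follow from your own computation. With the correct $R^2=X$ you get $R^\dagger = RX$ exactly, which is precisely the step ``$RX=R^\dagger$'' in the paper's proof. Then $I+\Pi'\otimes(R-I)+\Pi\otimes(R^\dagger-I) = \bigl[(I-\Pi-\Pi')\otimes I+(\Pi+\Pi')\otimes R\bigr]\cdot C$ holds on the nose, and the $X_1$ line follows.

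In the $Z_{k+1}$ case, $RZR^\dagger = R^2Z$ is also wrong: $R$ commuting with $X$ does not let $Z$ pass through $R^\dagger$. In fact $ZR^\dagger = iRZ$, so $RZR^\dagger = iR^2Z = iXZ = Y$. You can also read this off directly from $\frac12(Z+iXZ-iZX+XZX)$ on the $\Pi$-block. Your $\Pi$-block $-iXZ$ therefore has the wrong sign, and your factorization then drops the sign coming from $ZX=-XZ$. The two errors cancel, so your $\bigl((I-\Pi)-i\Pi\bigr)\otimes Z\cdot C = \sfC^{(k-1)}(S^\dagger)\otimes Z\cdot C$ is actually correct. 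However, you leave open whether it should be $S$ or $S^\dagger$, so the last line is not established.

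The observation you are missing is that the line as stated cannot be proved. With $\widehat U=(I+iU)/\sqrt2$ and $S=\mathrm{diag}(1,i)$, one gets $\sfC^{(k-1)}(S)\otimes Z\cdot C=(I-\Pi)\otimes Z-\Pi\otimes Y$. This differs from the true conjugate $(I-\Pi)\otimes Z+\Pi\otimes Y$ by the factor $\sfC^{(k-1)}(Z)\otimes I$. The correct formula is $\sfC^{(k-1)}(S^\dagger)\otimes Z\cdot C$, equivalently $C\cdot(\sfC^{(k-1)}(S)\otimes Z)$, which agrees with the paper's own $k=1$ example. The footnote's convention $(I-iU)/\sqrt2$ would produce $S$, but it would also turn $R$ into $R^\dagger$ in the first line. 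So no single convention makes both lines true as written.

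The paper's proof skips the $Z_{k+1}$ computation (``a similar argument''), which is how the slip survived. Your suspicion was justified, but it has to be settled by the short block computation above, not deferred to ``conventions.''
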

\begin{proof}
Since $C$ fixes $X_{k+1}$ and $Z_i,\,i=1,\ldots,k$ under conjugation, so does $\widehat{C}$.
To see the action of $\widehat{C}$ on $X_1$, one can proceed in many different ways, including induction or action on basis states. However, we present here a simple argument based on block matrices.
Indeed, if we denote $C_1 = \sfC^{(k-1)}(X)$, we have
\begin{align*}
\widehat{C}X_1\widehat{C}^\dagger & = \frac{1}{2}\matrix{(1+i)I&\\&I+iC_1}\cdot \matrix{&I\\I&}\cdot \matrix{(1-i)I&\\&I-iC_1}\\
& = \frac{1}{2}\matrix{&(1+i)(I-iC_1)\\(1-i)(I+iC_1)&}.
\end{align*}
Recall the gate $R = HS^\dagger H$ from~\eqref{e-Rgate}. Directly by construction, we have $\sfC^{(k-1)}(R)=(1-i)(I+iC_1)/2$.
Thus, we have
\begin{align*}
\widehat{C}X_1\widehat{C}^\dagger & = \matrix{&\sfC^{(k-1)}(R^\dagger)\\\sfC^{(k-1)}(R)&}\\
& = \matrix{&\sfC^{(k-1)}(R)\\\sfC^{(k-1)}(R)&}\cdot\matrix{I&\\&\sfC^{(k-1)}(X)} & \text{since } R X = R^\dagger\\
& =  X\otimes \sfC^{(k-1)}(R)\cdot \sfC^{(k)}(X).
\end{align*} 

Next, for $i = 1,\ldots, k$, note first that $X_i = \sfS_{1,i}X_1\sfS_{1,i}^\dagger$. It is also easy to see that $\sfS_{1,i}$ commutes with $C$ (since $C$ affects only the last qubit), and thus it also commutes with $\widehat{C}$. This simple observation implies the corresponding conjugation rules.

Finally, a similar argument as for $X_1$ works for $Z_{k+1}$, and for this reason we omit the details. This concludes the proof.
\end{proof}
\section{Lifting Clifford gates}\label{Sec4}
In this section, we will leverage the symplectic representation of the Clifford group.
For details, we refer the reader to~\cite{PRTC20,RCKP20}.
Let $C \in \cC_n^{(2)}$. Then, by definition, $C$ conjugates a Hermitian Pauli to another Hermitian Pauli. Let us focus on the generators $X_i = E({\bf e}_i,{\bf0}),Z_i = E({\bf0},{\bf e}_i)$ of the Pauli group, and let 
\be\label{e-conj}
CX_iC^\dagger = \pm E(\ba_i,\bb_i), \quad CZ_iC^\dagger = \pm E(\bc_i,\bd_i),
\ee 
Let $F_C$ denote the $2n\times 2n$ matrix formed by the vectors $(\ba_i,\bb_i),(\bc_i,\bd_i)$. 
Then by construction, for any Hermitian Pauli gate $E(\ba,\bb)$,~\eqref{e-conj} reads as
\begin{equation}\label{e-conj1}
CE(\ba,\bb)C^\dagger = \pm E((\ba,\bb)F_C).
\end{equation}
It is well-known that $F_C$ is a {\em symplectic matrix}, which means that it satisfies
\begin{equation*}
F_C\Omega_n F_C^\top = \Omega_n, \quad \Omega_n = \matrix{0_n&I_n\\I_n&0_n}.
\end{equation*}
The collection of all such matrices forms the {\em binary symplectic group} $\Sp(2n)$.
Note that symplectic matrices are precisely those matrices that preserve the symplectic inner product.
\begin{rem}
\begin{alphalist}
\item The map $C \mapsto \Sp(2n)$ is a surjective homomorphism of groups with kernel $\cP_n$, and thus, we have $\cC_n^{(2)}/\cP_n \cong \Sp(2n)$.
This means that any for Clifford gate, up to a Pauli gate, there exists a unique representative symplectic matrix.
For details and specific mapping, we refer the reader to~\cite{PVT21}.
\item If $C\in \cC_n^{(2)}$ is Hermitian, we have that $C = C^\dagger = C^{-1}$ and thus $F_C = F_C^{-1}$.
Symplectic matrices $F$ for which $F^2=I$ are called {\em symplectic involutions}.
Thus, when considering $\widehat{C}$, we must restrict to symplectic involutions.
\end{alphalist}
\end{rem}
Let $\bv\in \F_2^{2n}$ and let $E(\bv)$ be its associated Hermitian Pauli gate.
It is easy to verify that 
\be\label{e-Cv}
C_\bv = \frac{I_N \pm iE(\bv)}{\sqrt{2}}
\ee 
is a Clifford gate. The associated symplectic matrix is $F_\bv = I_{2n}+\Omega_n\bv^\top\bv$. In symplectic geometry, $F_\bv$ is called a {\em symplectic transvection}, and for this reason, we will refer to such $C_\bv$ as a {\em Clifford transvection}.
The following is well-known.
\begin{theo}[\cite{Omeara}]\label{T-transvection}
The symplectic group is generated by symplectic transvections. As a consequence, up to a Pauli gate, every Clifford gate is a product of Clifford transvections.
\end{theo}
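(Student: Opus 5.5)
The statement splits into two parts. First, the symplectic group $\Sp(2n)$ is generated by transvections $F_\bv = I_{2n}+\Omega_n\bv^\top\bv$. Second, every Clifford gate is, up to a Pauli gate, a product of Clifford transvections $C_\bv$. The plan is to prove the first part by a transitivity argument, which is the classical route in~\cite{Omeara}, and then derive the second from the homomorphism $\cC_n^{(2)}\to\Sp(2n)$ with kernel $\cP_n$.

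For the first part, we check directly that $F_\bv$ is symplectic and acts by $\bx\mapsto \bx+\inners{\bx}{\bv}\bv$. Next comes the key lemma: the group $T$ generated by transvections acts transitively on nonzero vectors of $\F_2^{2n}$. Given $\bx\neq \by$, both nonzero, there are two cases.
\begin{itemize}
\item If $\inners{\bx}{\by}=1$, then the transvection along $\bv=\bx+\by$ sends $\bx\mapsto\by$, because $\inners{\bx}{\bx+\by}=1$.
\item If $\inners{\bx}{\by}=0$, we pick $\bz$ with $\inners{\bx}{\bz}=\inners{\by}{\bz}=1$ and pass through $\bz$ in two steps. Such a $\bz$ exists by nondegeneracy: if $\bx,\by$ are linearly independent, the two linear conditions are independent; if they are dependent then $\bx=\by$.
\end{itemize}
We then show that $T$ acts transitively on hyperbolic pairs $(\bx,\by)$ with $\inners{\bx}{\by}=1$. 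After moving $\bx$ to $\bx'$, it remains to move $\by$ to $\by'$ while fixing $\bx'$. We use transvections along vectors orthogonal to $\bx'$: either $\bv=\by+\by'$ directly, or an intermediate vector $\bz$ with $\inners{\bx'}{\bz}=1$ and suitable products with $\by,\by'$. This is again a small case analysis.

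Induction on $n$ finishes the first part. Given $F\in\Sp(2n)$, compose with an element of $T$ so that it fixes the standard hyperbolic pair $(\be_1,\be_{n+1})$. The result then preserves the orthogonal complement, which is a symplectic space of dimension $2n-2$, and restricts to an element of $\Sp(2n-2)$. Transvections of the complement extend to transvections of $\F_2^{2n}$ (take $\bv$ in the complement), so the inductive hypothesis applies. The base case is $\Sp(2)$, which is generated by the transvections along $(1,0),(0,1),(1,1)$.

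For the second part, take $C\in\cC_n^{(2)}$ and write $F_C=F_{\bv_1}\cdots F_{\bv_m}$. Since $C_\bv\mapsto F_\bv$ and the map is a homomorphism, the product $C_{\bv_1}\cdots C_{\bv_m}$ has the same image as $C$. The two therefore differ by an element of the kernel, which is a Pauli gate up to global phase, as stated in the Remark. The main obstacle is the hyperbolic-pair transitivity step with the second vector fixed, where the case analysis over $\F_2$ has to be done carefully.
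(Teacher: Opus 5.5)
Your proof is correct, and it supplies an argument the paper never gives: the paper states this result without proof and cites O'Meara. Your route is the standard self-contained one:
\begin{itemize}
\item the transvection group acts transitively on nonzero vectors, and then on hyperbolic pairs;
\item induction on $n$ then runs through the orthogonal complement of $\langle \mathbf{e}_1,\mathbf{e}_{n+1}\rangle$.
\end{itemize}
The cited source also provides finer factorizations controlled by the residue space $\Res(F)$. The paper relies on exactly these in Theorem~\ref{T-res}, and through it in Theorems~\ref{T-diag-climb} and~\ref{T-gen-climb}. There it needs $F$ written as a prescribed number of transvections whose vectors lie in $\Res(F)$. Your approach gains elementarity and transparency. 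However, it gives no bound on the number of factors and no control over which $\bv_i$ occur. So it proves this statement but could not replace the citation used for Theorem~\ref{T-res}.

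Two small remarks. First, the hyperbolic-pair step you flag as the main obstacle is short. If $\inners{\by}{\by'}=1$, use $\by+\by'$ as you say. If $\inners{\by}{\by'}=0$ with $\by\neq\by'$, the transvection along $\bx'$ sends $\by\mapsto\bx'+\by$. The one along $\bx'+\by+\by'$ then sends $\bx'+\by\mapsto\by'$. Both vectors are orthogonal to $\bx'$, so $\bx'$ stays fixed throughout.

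Second, the paper uses the row-vector convention $CE(\bx)C^\dagger=\pm E(\bx F_C)$. Under it, $F_{C_1C_2}=F_{C_2}F_{C_1}$, so $C\mapsto F_C$ reverses products. You should therefore take $C_{\bv_m}\cdots C_{\bv_1}$ rather than $C_{\bv_1}\cdots C_{\bv_m}$. This is harmless, and the paper's Remark glosses over the same point.
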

Before proceeding with specific classes of Clifford gates, let us first address the necessary condition of Theorem~\ref{T-counter}.
Let $C$ be Clifford gate. Then, as a necessary condition for $\widehat{C}$ to climb to the third level, we must have that $CEC^\dagger$ (which already is a Pauli gate) must commute with $E$ for all Hermitian Pauli gates $E$.
If we let $E = \pm E(\bv),\,\bv\in \F_2^{2n}$ and $F_C$ the symplectic matrix that corresponds to $C$, then according to equations~\eqref{e-comm} and~\eqref{e-conj1}, we must have that 
\be\label{e-hyp}
\inners{\bv F_C}{\bv} = 0 ,\, \text{for all } \bv\in \F_2^{2n}.
\ee
Symplectic matrices that satisfy~\eqref{e-hyp} are called {\em hyperbolic}.

We have proved the following.
\begin{theo}\label{T-cliff}
If $C$ is a Clifford gate such that $F_C$ is not hyperbolic, then $\widehat{C}\notin\cC^{(3)}$.
\end{theo}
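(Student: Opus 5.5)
The plan is to prove the contrapositive by reducing to Theorem~\ref{T-counter}. Implicitly $C$ is Hermitian, since $\widehat{C}$ is only considered for Hermitian gates; then $C=C^\dagger$ and $F_C$ is a symplectic involution. Suppose $F_C$ is not hyperbolic. Then there is a vector $\bv\in\F_2^{2n}$ with $\inners{\bv F_C}{\bv}=1$. Take $E=E(\bv)$, the associated Hermitian Pauli gate. By~\eqref{e-conj1},
\[
CEC = CEC^\dagger = \pm E(\bv F_C).
\]
Set $E' = CEC$. This gate is Hermitian, because it is unitarily conjugate to a Hermitian gate. It is also a Pauli gate, because $C\in\cC^{(2)}$; the sign $\pm$ is harmless, since $-E(\bw)$ is still a Hermitian Pauli gate.

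Next we check that $E$ and $E'$ anti-commute. By~\eqref{e-comm}, $E(\bv)$ and $E(\bv F_C)$ commute or anti-commute according to $(-1)^{\inners{\bv}{\bv F_C}}$. The symplectic form is symmetric over $\F_2$, so this sign equals $(-1)^{\inners{\bv F_C}{\bv}} = -1$. Hence $EE'=-E'E$.

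We are now exactly in the hypotheses of Theorem~\ref{T-counter} with $k=2$: the gate $U=C\in\cC^{(2)}$ is Hermitian and satisfies $UEU=E'$ for anti-commuting Hermitian Pauli gates $E,E'$. Theorem~\ref{T-counter} then gives $\widehat{C}\notin\cC^{(3)}$, which is what we want.

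The only delicate points are bookkeeping. First, the Hermiticity of $C$ must be used so that $UEU$ coincides with $UEU^\dagger$, as Theorem~\ref{T-counter} requires. Second, one must confirm that the phase in~\eqref{e-conj1} does not affect Hermiticity or the commutation relation. Both are immediate, and I expect no step to be a genuine obstacle.
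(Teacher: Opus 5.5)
Your proposal is correct and follows the paper's route: a vector $\bv$ witnessing non-hyperbolicity yields, via~\eqref{e-comm} and~\eqref{e-conj1}, the anti-commuting Hermitian Pauli pair $E(\bv)$ and $CE(\bv)C^\dagger$, and Theorem~\ref{T-counter} with $k=2$ then gives $\widehat{C}\notin\cC^{(3)}$. Your explicit use of the Hermiticity of $C$ (so that $CEC=CEC^\dagger$) and your check that the sign in~\eqref{e-conj1} is harmless spell out points the paper leaves implicit.
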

The transvection decomposition of a symplectic matrix $F$ (as in Theorem~\ref{T-transvection}) is governed by the {\em residue space} defined as
\begin{equation*}
\Res(F) = \rs(I+F) = \{\bv+\bv F\mid \bv\in \F_2^{2n}\}.
\end{equation*}
Here $\rs(\bullet)$ is used to denote the row space of a matrix.
Due to Theorem~\ref{T-cliff}, we restrict to Clifford gates for which $F_C$ {\it is} hyperbolic. 
For hyperbolic involutions, we have the following.
\begin{theo}[\mbox{\cite[2.1.8]{Omeara}}]\label{T-res}
Let $F$ be a hyperbolic involution and let $r = \dim(\Res(F))$. Then, there exists a basis $\{\bv_1,\ldots,\bv_r\}$ of $\Res(F)$ and $\bv\in \Res(F)$ such that $F = T_{\bv_1}\cdots T_{\bv_r}T_\bv$.
\end{theo}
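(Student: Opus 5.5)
The plan is to linearize everything on the residue space. First I will record the structural facts about a symplectic involution $F$. Since $F^2=I$ over $\F_2$, we have $(I+F)^2=0$, so $\Res(F)\subseteq \ker(I+F)$, the space of vectors fixed by $F$. For symplectic $F$ one has $\Res(F)^\perp=\ker(I+F)$ with respect to the symplectic inner product. Hence $\Res(F)\subseteq\Res(F)^\perp$, that is, $\Res(F)$ is totally isotropic.

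Next I will parametrize $I+F$. Fix any basis $\bv_1',\ldots,\bv_r'$ of $\Res(F)$. The map $\bx\mapsto \bx(I+F)$ lands in $\Res(F)$ and vanishes on $\Res(F)^\perp$. It therefore factors through the functionals $\bx\mapsto \inners{\bx}{\bv_i'}$, which are independent because $\Res(F)^\perp$ has codimension $r$. This gives an $r\times r$ matrix $A=(a_{ij})$ over $\F_2$ with
\[
\bx F=\bx+\sum_{i,j}a_{ij}\inners{\bx}{\bv_i'}\bv_j' ,
\]
and $A$ is invertible since the image is all of $\Res(F)$.

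I then extract the two conditions on $A$.
\begin{itemize}
\item Expanding $\inners{\bx F}{\by F}=\inners{\bx}{\by}$ and using the isotropy of $\Res(F)$ to kill the quadratic term gives $\inners{\bx(I+F)}{\by}=\inners{\by(I+F)}{\bx}$, i.e.\ $A$ is symmetric.
\item The hyperbolicity condition~\eqref{e-hyp}, $\inners{\bx(I+F)}{\bx}=0$, becomes $\sum_i a_{ii}\inners{\bx}{\bv_i'}^2=0$ for all $\bx$. Since the values $\inners{\bx}{\bv_i'}$ can be prescribed arbitrarily, every $a_{ii}=0$. So $A$ is alternating, and in particular $r$ is even.
\end{itemize}

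On the other side, for vectors $\bu_1,\ldots,\bu_m$ in the isotropic space $\Res(F)$, the product $T_{\bu_1}\cdots T_{\bu_m}$ acts as $\bx\mapsto \bx+\sum_k\inners{\bx}{\bu_k}\bu_k$, because all cross terms $\inners{\bu_k}{\bu_l}$ vanish. Writing $\bu_k=\bc_k V$, where $V$ has rows $\bv_i'$, this product corresponds to the matrix $\sum_k \bc_k^\top\bc_k$ in the parametrization above. Thus the theorem reduces to a purely matrix statement: find $\bc\in\F_2^r$ and an invertible $C$ with rows $\bc_1,\ldots,\bc_r$ such that $A+\bc^\top\bc=C^\top C$. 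Then $\bv_k=\bc_kV$ is the desired basis and $\bv=\bc V$.

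For the matrix statement, take any nonzero $\bc$; the case $r=0$ is trivial, with $F=I$ and $\bv=\mathbf 0$. Set $B=A+\bc^\top\bc$, which is symmetric with diagonal $\bc\neq 0$, hence not alternating. Its determinant is $\det A\,(1+\bc A^{-1}\bc^\top)=\det A\neq0$, because $A^{-1}$ is again alternating and so $\bc A^{-1}\bc^\top=0$.

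The main obstacle is then the classical congruence fact: an invertible symmetric non-alternating matrix over $\F_2$ is congruent to $I_r$, so that $B=C^\top C$ with $C$ invertible. I would prove it by induction. Pick $\be$ with $\be B\be^\top=1$ and split off $\langle\be\rangle$ orthogonally. If the complement becomes alternating, repair it using the identity that $\mathrm{diag}(1)\oplus\matrix{0&1\\1&0}$ is congruent to $I_3$. This completes the plan.
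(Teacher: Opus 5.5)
Your proposal is correct. It does not follow the paper's route, because the paper has no proof of its own: Theorem~\ref{T-res} is simply quoted from O'Meara, where it belongs to the general theory of factoring symplectic transformations into transvections along the residual space.

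You instead specialize to involutions and linearize on $\Res(F)$. Each hypothesis is used once:
\begin{itemize}
\item $F^2=I$ only makes $\Res(F)$ totally isotropic (the paper's Lemma~\ref{L-com});
\item symplecticity makes your coefficient matrix $A$ symmetric;
\item hyperbolicity makes $A$ alternating.
\end{itemize}
The statement then collapses to solving $A+\bc^\top\bc=C^\top C$ with $C$ invertible. That is the classical fact that a nondegenerate non-alternating symmetric form over $\F_2$ has an orthonormal basis. (Your $A^{-1}$ is the Gram matrix of the Wall form $(\bx(I+F),\mathbf{w})\mapsto\inners{\bx}{\mathbf{w}}$ on $\Res(F)$, so this is the coordinate version of the classical picture.)

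What your argument buys is a self-contained, constructive proof that yields extra information:
\begin{itemize}
\item $r$ is even;
\item the extra vector $\bv$ can be \emph{any} nonzero vector of $\Res(F)$;
\item for $r=2$ the choice $\bc=(1,1)$ gives $A+\bc^\top\bc=I_2$, which is exactly the factorization $T_{\ba}T_{\bb}T_{\ba+\bb}$ used in Theorems~\ref{T-diag-climb} and~\ref{T-gen-climb};
\item $r$ transvections never suffice. Their directions would have to form a basis of $\Res(F)$, forcing $A=C^\top C$, whose diagonal $\mathbf{1}C$ is nonzero.
\end{itemize}
The citation buys brevity and generality beyond involutions, which the paper does not need.

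One detail to tighten: in your congruence induction, make the repair step explicit so the inductive hypothesis actually applies. If $\mathbf{e}^\perp$ is alternating, take a hyperbolic pair $\mathbf{f},\mathbf{g}$ in it and restart with $\mathbf{e}+\mathbf{f}$. Its orthogonal complement contains $\mathbf{e}+\mathbf{g}$, which has norm one, so that complement is non-alternating.
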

We start by considering diagonal Clifford gates, which are well-known to be determined by quadratic forms; see~\cite[Appendix A]{RCKP20} for instance.
Let $C\in \cC_n^{(2)}$ be a diagonal Clifford gate. Then, there exists an $n\times n$ binary symmetric matrix $A\in \Sym(n)$ such that 
\begin{equation*}
C = [i^{\bx A \bx^\top}]_{\bx\in \F_2^n}.
\end{equation*}
The equation holds up to phases and Pauli gates and the computation on the exponent is modulo 4. 
Accordingly, the respective symplectic matrix is
\be\label{e-sym-symp}
F_C = \matrix{I&A\\0&I}, \, A^\top = A.
\ee
Since we consider only Hermitian gates, we must have $\bx A \bx^\top = 0 \mod 2$ for all $\bx$ (so that $C$ has $\pm1$ entries). 
Since $A$ is symmetric and $\bx$ is binary, we have that 
\begin{equation*}
\bx A \bx^\top = \sum_{i=1}^nx_ia_{i,i} + 2\sum_{i<j}x_ix_ja_{i,j}.
\end{equation*}
Thus, $\bx A\bx^\top = 0 \mod 2$ if and only if $A$ has all-zero diagonal. 
\begin{rem}
Note that a symplectic matrix as in~\eqref{e-sym-symp} is hyperbolic if and only if $A$ has all-zero diagonal. 
Indeed, for $\bv = (\ba,\bb)$, we have that 
\begin{align*}
\inners{\bv F_C}{\bv} &= \inners{(\ba,\ba A+\bb)}{(\ba,\bb)}\\
& = \ba\bb^\top + \ba A\ba^\top+\bb\ba^\top\\
& = \ba A\ba^\top.
\end{align*}
Thus, for the case of diagonal Clifford gates, the necessary condition of being Hermitian is the same as the overall necessary condition of Theorem~\ref{T-cliff}.
\end{rem}
\begin{theo}\label{T-diag-climb}
Let $C\in \cC^{(2)}$ be a diagonal Clifford gate such that $F_C$ is a hyperbolic involution and $\dim(\Res(F_C))=2$. Then $\widehat{C}\in \cC^{(3)}$.
\end{theo}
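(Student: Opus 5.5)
The plan is to reduce $C$ to a normal form by Clifford conjugation, and then read off $\widehat{C}$ as an explicit diagonal gate. The reduction uses the fact that for any Clifford $D$ we have $\widehat{DCD^\dagger} = D\widehat{C}D^\dagger$. By Theorem~\ref{T-closed}(a), $\widehat{C}\in\cC^{(3)}$ if and only if $\widehat{DCD^\dagger}\in\cC^{(3)}$.

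\textbf{Step 1: normal form for $A$.} Write $F_C$ as in~\eqref{e-sym-symp}. Then $\Res(F_C)=\rs\matrix{0&A\\0&0}$, so $\dim(\Res(F_C))=\rk(A)=2$. Since $C$ is Hermitian, $A$ is symmetric with zero diagonal, i.e.\ alternating. A rank-$2$ alternating form is congruent to the standard one, so there is $Q\in\GL(n)$ with $QAQ^\top = \mathbf{e}_1^\top\mathbf{e}_2+\mathbf{e}_2^\top\mathbf{e}_1$. The change of variables $\bx\mapsto \bx Q$ is implemented by a CNOT-type Clifford permutation gate $D_Q$, whose symplectic matrix is $\mathrm{diag}(Q,Q^{-\top})$. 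Conjugating by $D_Q$ keeps $C$ diagonal and Hermitian and replaces $A$ by the normal form above.

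\textbf{Step 2: identify the phase function.} After Step 1, $C$ is diagonal with $\pm1$ entries, so up to a global sign $C=[(-1)^{f(\bx)}]_{\bx}$ with $f(\bx)=x_1x_2\oplus \ell(\bx)$ for some $\ell\in(\F_2^n)^*$; the linear part comes from the Pauli ambiguity in the quadratic-form description. I will then simplify $\ell$ by two moves:
\begin{itemize}
\item[(i)] Conjugating by a Pauli $X$-string shifts $\bx\mapsto\bx+\mathbf{t}$. This adds $t_2x_1+t_1x_2$ to $\ell$, so the $x_1,x_2$ components of $\ell$ can be removed.
\item[(ii)] If a nonzero part of $\ell$ remains, it is supported on $x_3,\dots,x_n$. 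A further invertible linear change, again a Clifford permutation fixing $x_1,x_2$, turns it into $x_3$.
\end{itemize}
So without loss of generality $f=x_1x_2$ or $f=x_1x_2\oplus x_3$, possibly with an overall sign $\pm C$. For $-C$ I would replace $f$ by $f\oplus1$ and repeat the argument below.

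\textbf{Step 3: compute $\widehat{C}$ and place it in the hierarchy.} For $u=\pm1$ we have $(1+iu)/\sqrt2=e^{i\pi/4}(-i)^{(1-u)/2}$. Hence
\[
\widehat{C}=e^{i\pi/4}\,\mathrm{diag}\bigl((-i)^{f(\bx)}\bigr).
\]
In the first case this is, up to phase, $\sfC(S^\dagger)$, which lies in $\cC^{(3)}$ by Corollary~\ref{C-diag}. In the second case, using $(-i)^{a\oplus b}=(-i)^{a}(-i)^{b}(-1)^{ab}$ for bits $a,b$, we get
\[
\widehat{C}\propto \sfC(S^\dagger)_{1,2}\cdot S^\dagger_3\cdot \sfC^{(2)}(Z)_{1,2,3}.
\]
All three factors are diagonal gates in $\cC^{(3)}$ by Theorem~\ref{T-diagonal} and Corollary~\ref{C-diag}. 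The diagonal gates of a level form a group (Theorem~\ref{T-closed}(b)), so $\widehat{C}\in\cC^{(3)}$. The case $f\oplus1$ changes $\widehat{C}$ only by a global phase times $(-i)^{1-2f}$, which adds the diagonal Clifford factor $\sfC(Z)_{1,2}\,Z_3$ or $\sfC(Z)_{1,2}$, and this is harmless by the same group argument.

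\textbf{Expected obstacle.} The main difficulty is the bookkeeping in Step 2: the linear and Pauli part of a Hermitian diagonal Clifford is not determined by $F_C$ alone. One must check that every admissible linear term $\ell$ can be normalized using only Clifford conjugations. One must also not forget the case where $\ell$ is not in the span of $x_1,x_2$, since this case is what produces the $\sfC^{(2)}(Z)$ factor.
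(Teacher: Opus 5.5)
Your proof is correct, but it takes a genuinely different route from the paper's.

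\textbf{The paper's route.} The paper never conjugates $C$ into a normal form. It uses the transvection decomposition (Theorem~\ref{T-res}) to write $C$, ``up to a phase and a Pauli gate,'' as a product of three Clifford transvections. It then fixes signs to get $C=\frac12\big(I+Z(\ba)+Z(\bb)-Z(\ba+\bb)\big)$. Finally, in the four cases given by $(\ba\bc^\top,\bb\bc^\top)$, it checks directly that $\widehat{C}X(\bc)\widehat{C}^\dagger$ is $X(\bc)$ times a product of two Clifford transvections. This is elementary and gives the conjugation images explicitly. It needs neither Theorem~\ref{T-closed}(b) nor $\sfC^{(2)}(Z)\in\cC^{(3)}$.

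\textbf{Your route.} You use congruence of alternating forms, $X$-shifts and a CNOT-type change of variables. You then read off $\widehat{C}$, up to phase, as $\sfC(S^\dagger)_{1,2}$ or $\sfC(S^\dagger)_{1,2}\,S^\dagger_3\,\sfC^{(2)}(Z)_{1,2,3}$. This relies on those two facts, but it identifies $\widehat{C}$ itself. It also does the Pauli bookkeeping that the paper waves through as ``without loss of generality.''

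\textbf{Why the bookkeeping matters.} The map $U\mapsto\widehat{U}$ does not commute with multiplication by a Pauli gate. Write $C=\pm Z(\bd)\cdot\frac12\big(I+Z(\ba)+Z(\bb)-Z(\ba+\bb)\big)$.
\begin{itemize}
\item A global sign is absorbed via $\widehat{-C}=\widehat{C}^\dagger$.
\item A factor $Z(\bd)$ with $\bd\in\rs(A)$ is absorbed by $X$-conjugation, which is your move (i).
\item For $\bd\notin\rs(A)$, e.g.\ $C=Z_3\cdot\sfC(Z)_{1,2}$, the gate is traceless. Every sign variant of the paper's normal form has trace $\pm 2^{n-1}$. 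Since Clifford conjugation preserves the trace, no Clifford conjugation relates the two, and the paper's four-case computation does not literally cover this case.
\end{itemize}
Your case $f=x_1x_2\oplus x_3$ treats exactly this situation. It shows that the extra $\sfC^{(2)}(Z)$ factor still leaves $\widehat{C}\in\cC^{(3)}$, so your argument is, if anything, the more complete of the two.
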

\begin{proof}
Since $C$ is a diagonal gate, we have that $F_C$ is as in~\eqref{e-sym-symp}. Thus $\Res(F_C) = \{({\bf 0}, \bx)\mid \bx \in \rs(A)\}$.
Of course Clifford transvections of form $C_{({\bf 0}, \bx)}$ are diagonal gates.

Combining Theorem~\ref{T-res} with~\eqref{e-Cv}, we can write (up to a phase and a Pauli gate)
\begin{equation*}
C = \frac{I\pm iZ(\ba)}{\sqrt{2}}\cdot\frac{I\pm iZ(\bb)}{\sqrt{2}}\cdot\frac{I\pm iZ(\ba+\bb)}{\sqrt{2}}
\end{equation*}
for some diagonal Pauli gates $Z(\ba),Z(\bb)$ with $\ba,\bb\in \rs(A)$.
We may fix the signs without loss of generality, and we may assume that (after fixing all signs to +)
\begin{equation*}\label{e-diagcliff}
C = \frac{1}{2}\Big(I+Z(\ba)+Z(\bb) - Z(\ba+\bb)\Big).
\end{equation*}
To show that $\widehat{C}\in \cC^{(3)}$, we must show that $\widehat{C}P\widehat{C}^\dagger \in \cC^{(2)}$ for all $P\in \cP$. 
But since $C$ is a diagonal gate, we only need to consider $P = X(\bc)$.
First, note that 
\begin{equation*}
CX(\bc)C^\dagger = \begin{cases}
X(\bc), & \text{ if } \ba\bc^\top = 0, \bb\bc^\top = 0,\\
Z(\ba)X(\bc), & \text{ if } \ba\bc^\top = 0, \bb\bc^\top = 1,\\
Z(\bb)X(\bc), & \text{ if } \ba\bc^\top = 1, \bb\bc^\top = 0,\\
-Z(\ba+\bb)X(\bc), & \text{ if } \ba\bc^\top = 1, \bb\bc^\top = 1.
\end{cases}
\end{equation*}
and 
\begin{equation*}
C\, X(\bc) - X(\bc)\,C = \begin{cases}
0, & \text{ if } \ba\bc^\top = 0, \bb\bc^\top = 0,\\
\Big(Z(\bb) - Z(\ba+\bb)\Big)X(\bc), & \text{ if } \ba\bc^\top = 0, \bb\bc^\top = 1,\\
\Big(Z(\ba) - Z(\ba+\bb)\Big)X(\bc), & \text{ if } \ba\bc^\top = 1, \bb\bc^\top = 0,\\
\Big(Z(\ba)-Z(\bb)\Big)X(\bc), & \text{ if } \ba\bc^\top = 1, \bb\bc^\top = 1.
\end{cases}
\end{equation*}
Combining all of this, we obtain
\begin{align*}
\widehat{C}\, X(\bc)\,\widehat{C}^\dagger & = \frac{1}{2}\Big(X(\bc) + iC\,X(\bc) - iX(\bc)\,C + C\,X(\bc)\,C^\dagger\Big)\\
& = \begin{cases}
X(\bc), & \text{ if } \ba\bc^\top = 0, \bb\bc^\top = 0,\\
\frac{1}{2}\Big(I + iZ(\bb)-iZ(\ba+\bb)+Z(\ba)\Big)\,X(\bc), & \text{ if } \ba\bc^\top = 0, \bb\bc^\top = 1,\\
\frac{1}{2}\Big(I + iZ(\ba)-iZ(\ba+\bb)+Z(\bb)\Big)\,X(\bc), & \text{ if } \ba\bc^\top = 1, \bb\bc^\top = 0,\\
\frac{1}{2}\Big(I + iZ(\ba)+iZ(\bb)-Z(\ba+\bb)\Big)\,X(\bc), & \text{ if } \ba\bc^\top = 1, \bb\bc^\top = 1.
\end{cases}
\end{align*}
The first case is obvious, and for the last three cases note that all those gates are a product of two Clifford transvections. This concludes the proof.
\end{proof}
\begin{exa}\label{nonexa}
One can verify that, for $n=3$ qubits, all the diagonal Clifford gates $C$ for which $F_C$ is hyperbolic satisfy $\widehat{C}\in \cC^{(3)}_3$.
For $n=4$ qubits, this is no longer the case.
Consider the matrix 
\begin{equation*}
A = \matrix{0&0&0&1\\0&0&1&0\\0&1&0&0\\1&0&0&0}.
\end{equation*}
Let $C\in \cC_4^{(2)}$ be the Clifford gate that corresponds to $F_C$ as in~\eqref{e-sym-symp}.
We have that $\dim(\Res(F_C)) = \dim(\rs(A))=4$. By Theorem~\ref{T-transvection}, we have that $C$ is (up to a Pauli gate and a phase) product of five Clifford transvections, and we can write it is a linear combination of $Z$-gates
\begin{equation*}
C = \frac{1}{4}\sum_{\ba\in \F_2^4}\alpha_\ba Z(\ba).
\end{equation*}
As it turns out, $\widehat{C}\notin\cC^{(3)}_4$ but $\widehat{C}\in\cC^{(4)}_4$. 
To see this, note first that since $\widehat{C}$ is diagonal, we need to only examine its action on $X$-gates.
Let us denote $\sfC\sfZ_{i,j}$ the control-$Z$ gate with qubit $i$ as control and qubit $j$ as target.
By construction, we have $C = {\sf CZ}_{1,4}\cdot {\sf CZ}_{2,3}$.
Then, the action of $\widehat{C}X_1\widehat{C}^\dagger$ is given by
\begin{align*}
X_1&\rightarrow X_1Z_4,\\
X_2&\rightarrow \frac{X_2}{2}\left(I+Z_3+Z_4-Z_3Z_4\right) = X_2 \cdot{\sf CZ}_{3,4},\\
X_3&\rightarrow \frac{X_3}{2}\left(I+Z_2+Z_4-Z_2Z_4\right) = X_3 \cdot{\sf CZ}_{2,4},\\
X_4&\rightarrow \frac{X_4Z_1Z_4}{2}\left(I+Z_2+Z_3-Z_2Z_3\right) = X_4Z_1Z_4 \cdot{\sf CZ}_{2,3}.
\end{align*}
Thus, $\widehat{C}X_1\widehat{C}^\dagger\in \cC^{(3)}_4$. 
A similar argument holds for other $X$-gates.
\end{exa}

We now continue with the case of a general Hermitian Clifford gate. First, we will need the following result.
\begin{lem}\label{L-com}
Let $F$ be an involution. Then $\inners{\bx}{\by}=0$ for all $\bx,\by\in \Res(F)$.
\end{lem}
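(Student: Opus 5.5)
The plan is a direct computation with the symplectic inner product. The only properties used are that $F$ is symplectic and that $F^2=I$; ``involution'' here is read as a symplectic involution, as in the Remark following~\eqref{e-conj1}. Any two elements of $\Res(F)$ can be written as $\bx = \bu + \bu F$ and $\by = \bw + \bw F$ for some $\bu,\bw\in\F_2^{2n}$, by the definition of the residue space as the row space of $I+F$. Since the symplectic inner product is bilinear over $\F_2$, I expand
\begin{equation*}
\inners{\bx}{\by} = \inners{\bu}{\bw} + \inners{\bu}{\bw F} + \inners{\bu F}{\bw} + \inners{\bu F}{\bw F}.
\end{equation*}
The argument then consists of pairing these four terms so that each pair cancels modulo $2$.

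\textbf{First pair.} Because $F\Omega_nF^\top=\Omega_n$, the matrix $F$ preserves the symplectic inner product. Concretely, $\inners{\bu F}{\bw F} = \bu F\Omega_n F^\top \bw^\top = \bu\Omega_n\bw^\top = \inners{\bu}{\bw}$. Hence the first and fourth terms are equal, and their sum vanishes over $\F_2$.

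\textbf{Second pair.} Applying the same invariance to the pair $(\bu,\bw F)$ gives $\inners{\bu}{\bw F} = \inners{\bu F}{\bw F^2}$. Using $F^2 = I$, this equals $\inners{\bu F}{\bw}$. So the second and third terms are also equal and cancel modulo $2$. This gives $\inners{\bx}{\by}=0$ for all $\bx,\by\in\Res(F)$, which is the claim.

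The proof is short. The only point that needs care is the second pair, since it is the one place where the involution hypothesis enters; without $F^2=I$ the cross terms need not cancel. I would also state explicitly that bilinearity lets us treat general elements of $\Res(F)$ as images $\bu(I+F)$, so that no basis argument is needed.
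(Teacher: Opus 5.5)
Your proof is correct and is essentially the paper's argument. It uses the same four-term expansion of $\inners{\bx}{\by}$: the outer terms cancel by $F\Omega_nF^\top=\Omega_n$, and the cross terms cancel because the involution gives $F\Omega_n=\Omega_nF^\top$, which is exactly your identity $\inners{\mathbf{u}}{\mathbf{w}F}=\inners{\mathbf{u}F}{\mathbf{w}}$ (the paper likewise tacitly needs $F$ to be a symplectic involution, which you rightly make explicit).
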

\begin{proof}
For $\bx,\by\in \Res(F)$, there exist $\bv_1,\bv_2$ such that $\bx = \bv_1+\bv_1F,\, \by = \bv_2+\bv_2F$.
Then, we have
\begin{align*}
\inners{\bx}{\by} & = \inners{\bv_1+\bv_1F}{\bv_2+\bv_2F}\\
& = (\bv_1+\bv_1F)\,\Omega \,(\bv_2+\bv_2F)^\top \\
& = \bv_1\Omega\bv_2^\top+\bv_1\Omega F^\top\bv_2^\top + \bv_1F\Omega\bv_2^\top + \bv_1F\Omega F^\top\bv_2^\top.
\end{align*}
Since $F$ is symplectic we have that $F\Omega F^\top = \Omega$ and since $F$ is an involution we have that $F\Omega=\Omega F^\top$, and the result follows.
\end{proof}
\begin{theo}\label{T-gen-climb}
Let $C \in \cC^{(2)}$ be a Clifford gate such that $F_C$ is hyperbolic and $\dim(\Res(F_C)) = 2$. Then $\widehat{C} \in \cC^{(3)}$.
\end{theo}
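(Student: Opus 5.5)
The plan is to reduce to the same algebraic form that appeared in the proof of Theorem~\ref{T-diag-climb}, with the diagonal Paulis $Z(\ba),Z(\bb)$ replaced by arbitrary commuting Hermitian Paulis. First, $C$ is Hermitian, so $F_C$ is an involution. Since $F_C$ is also hyperbolic, Theorem~\ref{T-res} gives a basis $\{\bv_1,\bv_2\}$ of $\Res(F_C)$ and a vector $\bv\in\Res(F_C)$ with $F_C=T_{\bv_1}T_{\bv_2}T_{\bv}$.

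I would argue that we may take $\bv=\bv_1+\bv_2$, by the following case check. If $\bv=\bf 0$, then $T_\bv=I$. If $\bv\in\{\bv_1,\bv_2\}$, then $T_\bv$ cancels against a commuting factor, since the residue space is isotropic by Lemma~\ref{L-com} and so transvections with vectors in it commute and are involutions. In each of these cases $F_C$ would be a single transvection or the identity. That contradicts $\dim\Res(F_C)=2$, because $\Res(T_\bx)$ is spanned by $\bx$.

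By Theorem~\ref{T-closed} and \eqref{e-Cv}, up to a phase and a Pauli gate $P_0$ we can then write
\[
C = P_0\cdot\frac{I+iE_1}{\sqrt2}\cdot\frac{I+iE_2}{\sqrt2}\cdot\frac{I+iE_3}{\sqrt2},
\]
where $E_1=E(\bv_1)$, $E_2=E(\bv_2)$, and $E_3=\pm E_1E_2$ are commuting Hermitian Paulis; they commute by Lemma~\ref{L-com}. Expanding exactly as in the diagonal case gives $C=\frac{\lambda}{2}\big(I+E_1+E_2-E_1E_2\big)\,P_0$ after absorbing signs and phases.

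The next step is to handle the Pauli factor $P_0$ and the phase. The hypothesis that $C$ is Hermitian, together with hyperbolicity, should force $P_0$ to commute with $E_1,E_2$. Writing $C=\lambda P_0 D$ with $D=\frac12(I+E_1+E_2-E_1E_2)$, Hermiticity should then force $P_0\in\langle E_1,E_2\rangle$ up to a phase. Consequently $C$ equals, up to sign, an operator of the form $\frac12(\pm I\pm E_1\pm E_2\pm E_1E_2)$ with an odd number of minus signs. This is a "generalized controlled-$Z$" in the commuting pair $E_1,E_2$. Indeed, a Clifford $W$ maps $E_1,E_2$ to $Z_1,Z_2$ (possible since they are independent and commuting), and after this change $C$ becomes, up to Paulis, $\sfC(Z)$ on qubits $1,2$. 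I expect this normalization, pinning down $P_0$ and the signs, to be the main obstacle. I would attack it by noting that hyperbolicity is preserved by conjugation with a Clifford, that $\widehat{WCW^\dagger}=W\widehat{C}W^\dagger$, and that $\cC^{(3)}$ is closed under Clifford conjugation by Theorem~\ref{T-closed}. So it suffices to treat $C'=WCW^\dagger$, which is a diagonal Hermitian Clifford with $\dim\Res=2$.

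Once reduced, two routes finish the proof. One is to apply Theorem~\ref{T-diag-climb} directly to $C'$. The other is to repeat its computation with $E_1,E_2$ in place of $Z(\ba),Z(\bb)$: for any Pauli $P$, the value of $\widehat{C}P\widehat{C}^\dagger$ depends only on the commutation pattern of $P$ with $E_1,E_2$. It equals $P$ when $P$ commutes with both, and otherwise it is $\frac12\big(I+iE_a+iE_b\pm E_c\big)P$ with $\{a,b,c\}$ the three nonzero elements of $\langle E_1,E_2\rangle$. That operator is a product of two Clifford transvections, hence Clifford, which gives $\widehat{C}\in\cC^{(3)}$.
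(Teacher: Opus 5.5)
Your route is the paper's: write $C$, up to a Pauli and a phase, as a product of Clifford transvections in $E_1,E_2,E_1E_2$; get commutation from Lemma~\ref{L-com}; conjugate by a Clifford to a diagonal gate; invoke Theorem~\ref{T-diag-climb}. The paper simply takes the third vector to be $\bv_1+\bv_2$ and discards the Pauli factor and the signs. You try to justify both, which is the right instinct, but two of your justifications are wrong.

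The minor one is the case $\bv=\mathbf 0$. Then $F_C=T_{\bv_1}T_{\bv_2}$, which is not a single transvection and has residue space $\langle\bv_1,\bv_2\rangle$, so there is no contradiction with $\dim\Res(F_C)=2$. This case is excluded by hyperbolicity instead: for orthogonal $\bv_1,\bv_2$ one computes $\inners{\bx T_{\bv_1}T_{\bv_2}}{\bx}=\inners{\bx}{\bv_1+\bv_2}$, which is not identically zero.

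The serious one is the normalization of $P_0$. Write $C=\lambda P_0D$. The condition $C^2=I$ forces $P_0$ to commute with $E_1,E_2$, since otherwise $P_0DP_0^{-1}$ is not a scalar multiple of $D$. Then $C=C^\dagger$ forces $E_0:=\lambda P_0$ to be Hermitian. Nothing forces $E_0\in\pm\langle E_1,E_2\rangle$.

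For example, $C=X_3\cdot{\sf CZ}_{1,2}$ is a Hermitian Clifford with $F_C$ hyperbolic and $\dim\Res(F_C)=2$. Yet ${\sf Tr}(C)=0$, so it is not of the form $\frac12(\pm I\pm E_1\pm E_2\pm E_1E_2)$. This breaks both of your routes:
\begin{itemize}
\item A $W$ that only sends $E_1,E_2$ to $Z_1,Z_2$ need not make $WCW^\dagger$ diagonal. Here $W=I$ qualifies and $C$ is not diagonal.
\item Your second route is false as stated. $Z_3$ commutes with $E_1=Z_1$ and $E_2=Z_2$, but $\widehat{C}Z_3\widehat{C}^\dagger=iCZ_3\neq Z_3$.
\end{itemize}

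The repair is short. Since $E_0,E_1,E_2$ pairwise commute, choose $W$ diagonalizing all three at once. Then $WCW^\dagger=\pm Z(\bc)\,{\sf CZ}_{1,2}$ is a diagonal Hermitian Clifford meeting the hypotheses of Theorem~\ref{T-diag-climb}.

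That theorem's proof also drops the Pauli factor, so you may prefer to check this case directly. Up to a global phase, $\widehat{WCW^\dagger}={\rm diag}(i^{\mp q(\bx)})$ with $q(\bx)=x_1x_2+\bc\bx^\top \bmod 2$. Lifting $q$ to $\Z_4$ writes this as a product of $S^{\pm1}$, ${\sf CZ}$, $\sfC(S^{\pm1})$ and $\sfC^{(2)}(Z)$ gates. All of these lie in the diagonal group of $\cC^{(3)}$ (Theorem~\ref{T-closed}(b)), which gives $\widehat{C}\in\cC^{(3)}$.
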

\begin{proof}
By assumption (as in the proof of Theorem~\ref{T-diag-climb}), we can write (up to a Pauli gate and a phase)
\be\label{e-Cv2}
C =  \frac{I\pm iE_1}{\sqrt{2}}\cdot\frac{I\pm iE_2}{\sqrt{2}}\cdot\frac{I\pm iE_1E_2}{\sqrt{2}}.
\ee
Since $C$ is assumed to be Hermitian, we have that corresponding symplectic matrix is an involution and thus $E_1,\, E_2$ commute due to Lemma~\ref{L-com}.
As such, there exists a Clifford gate $C_1$ (see~\cite[Algorithm 1]{RCKP20}, for instance) such that $C_1E_1C_1^\dag = Z_1, C_1E_2C_1^\dag = Z_2$ are diagonal Clifford gates.
Thus, (after perhaps fixing the signs in~\eqref{e-Cv2}) we have that 
\begin{align*}
C &= \frac{1}{2}\Big(I+E_1+E_2-E_1E_2\Big)    \\
& = C_1^\dagger\left[ \frac{1}{2}\Big(I+Z_1+Z_2-Z_1Z_2\Big) \right] C_1.
\end{align*} 
If we denote $D = (I+Z_1+Z_2-Z_1Z_2)/2$, we have that $ \widehat{C}= C_1^\dagger \widehat{D} C_1$, and the result follows by Theorem~\ref{T-closed} and Theorem~\ref{T-diag-climb}.
\end{proof}
\begin{rem}
The argument of Theorem~\ref{T-gen-climb} implies that we only need to consider Hermitian diagonal Clifford gates. 
Indeed, any Hermitian Clifford gate, as described above, can be diagonalized via some other Clifford gate, and the resulting diagonal gate governs the behavior of the original gate.
\end{rem}

Guided by the ideas of Example~\ref{nonexa}, we will show next that the conditions of Theorem~\ref{T-gen-climb} are also necessary. 
To accomplish this, we first need some preparation.
Let $C\in \cC_n^{(2)}$ be a Hermitian Clifford gate such that $\dim(\Res(F_C)) = r$.
As a generalization of~\eqref{e-diagcliff} (see also~\cite{PRTC20} for details), we can write
\begin{equation}\label{e-cliffsupport}
C = \frac{1}{\sqrt{2^r}}\sum_{E\in S}\a_EE,
\end{equation}
where $S$ is a subgroup of the Pauli group $\cP_n$. 
We will refer to~\eqref{e-cliffsupport} as the {\em Pauli expansion} of $C$.
Since $C$ is Hermitian, we must have that $\a_E = \pm 1$.
Additionally, since the set of Hermitian Pauli gates forms an orthonormal basis (with respect to the Hermitian inner product), the coefficients can be computed as 
\begin{equation*}
\frac{\a_E}{\sqrt{2^r}} = \frac{1}{2^n}{\sf Tr}(EC).
\end{equation*}
Here ${\sf Tr}(\bullet)$ is used to denote the trace of a matrix.
We have proved the following (recall that, except the identity, Pauli gates are traceless).
\begin{lem}\label{L-trace}
Let $C\in \cC_n^{(2)}$ be as in~\eqref{e-cliffsupport}. For any Hermitian Pauli gate $E$ we have 
\begin{equation*}
|{\sf Tr}(EC)|=\begin{cases} 2^{n-r/2},&\text{if } E\in S,\\0,&\text{if } E\notin S.\end{cases}  
\end{equation*}
In particular, the magnitudes of all the coefficients in the Pauli expansion of $C$ are the same.
\end{lem}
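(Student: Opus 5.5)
The plan is to read off the trace directly from the Pauli expansion~\eqref{e-cliffsupport}, using orthogonality of Pauli gates under the trace (Hilbert--Schmidt) inner product. Before that, I will normalize the expansion. Each element of $S$ is a Pauli gate, so it can be written up to a phase in $\{\pm1,\pm i\}$ as a Hermitian Pauli gate $E(\bv)$. I will absorb that phase into the coefficient and merge terms that are proportional to the same $E(\bv)$. After this, $C=\frac{1}{\sqrt{2^r}}\sum_{E\in S}\a_E E$ is a sum over pairwise non-proportional Hermitian Pauli gates. This matches the paper's convention that $\a_E=\pm1$, which follows from $C$ being Hermitian together with the uniqueness of the expansion in the basis of Hermitian Pauli gates.

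The key identity is that, for Hermitian Pauli gates $E,E'$,
\begin{equation*}
{\sf Tr}(EE') = \begin{cases} \pm 2^n, & \text{if } E' = \pm E,\\ 0, & \text{otherwise.}\end{cases}
\end{equation*}
This holds because $EE'$ is, up to a phase, a Pauli gate. It is proportional to the identity exactly when $E'=\pm E$, and every non-identity Pauli gate is traceless: it is a tensor product of single-qubit factors, at least one of which is traceless.

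With this, linearity of the trace gives ${\sf Tr}(EC) = \frac{1}{\sqrt{2^r}}\sum_{E'\in S}\a_{E'}{\sf Tr}(EE')$. Since the elements of $S$ are pairwise non-proportional, at most one term survives. If $E$ (up to sign) lies in $S$, we get $|{\sf Tr}(EC)| = 2^n|\a_E|/2^{r/2} = 2^{n-r/2}$. Otherwise every term vanishes and ${\sf Tr}(EC)=0$. The statement that all coefficients in the Pauli expansion have the same magnitude is then immediate, since each coefficient equals $2^{-n}{\sf Tr}(EC)$.

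The only genuinely delicate step is justifying $|\a_E|=1$ with the normalization $2^{-r/2}$, that is, that the support has exactly $2^r$ elements, each with unit coefficient. Here I would rely on the construction cited from~\cite{PRTC20}, which builds $C$ as a product of Clifford transvections $(I\pm iE_j)/\sqrt{2}$ with commuting $E_j$ by Lemma~\ref{L-com}, as in the proof of Theorem~\ref{T-gen-climb}. As a consistency check I would use Parseval. Since $C$ is unitary, $2^{-n}{\sf Tr}(C^\dagger C)=1$, and expanding gives $\sum_{E\in S}|\a_E|^2/2^r = 1$. With $|\a_E|=1$ this forces $|S|=2^r$, matching the subgroup of order $2^r$ generated by a basis of $\Res(F_C)$.
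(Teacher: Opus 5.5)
Your proposal is correct and follows the paper's own argument: the lemma is read off from the Pauli expansion~\eqref{e-cliffsupport} using trace-orthogonality of Hermitian Pauli gates, $\a_E/\sqrt{2^r}=2^{-n}{\sf Tr}(EC)$, with the substantive input $|\a_E|=1$ on a support of size $2^r$ taken from the structure of the expansion cited from~\cite{PRTC20}. Your phase normalization and Parseval check are extra care the paper omits, and you are right to rest the unit-magnitude claim on that cited structure rather than on Hermiticity, which by itself only makes the $\a_E$ real.
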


We are now ready to prove the converse of Theorem~\ref{T-gen-climb}.
\begin{theo}\label{T-converse}
Let $C\in \cC_n^{(2)}$ be a Hermitian Clifford gate such that $\dim(\Res(F_C)) = r > 2$. 
Then $\widehat{C}\notin \cC_n^{(3)}$.
\end{theo}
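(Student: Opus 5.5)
We reduce to the diagonal case, as in the proof of Theorem~\ref{T-gen-climb}, and then find one Pauli $X(\bc)$ such that $\widehat{C}X(\bc)\widehat{C}^\dagger$ has a Pauli expansion incompatible with Lemma~\ref{L-trace}.

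\emph{Reduction.} If $F_C$ is not hyperbolic we are done by Theorem~\ref{T-cliff}, so assume $F_C$ is a hyperbolic involution. By Lemma~\ref{L-com}, $\Res(F_C)$ is isotropic. Hence the Pauli gates attached to it commute, and there is a Clifford $C_1$ with $C_1^\dagger$-conjugation sending them to $Z$-type Paulis, as in Theorem~\ref{T-gen-climb}. Since $\widehat{C_1CC_1^\dagger}=C_1\widehat{C}C_1^\dagger$, Theorem~\ref{T-closed} lets us assume $C$ is diagonal. Then $F_C$ is as in~\eqref{e-sym-symp} with $A$ symmetric with zero diagonal, and $C=[(-1)^{q(\bx)}]_\bx$ with $q(\bx)=\sum_{i<j}a_{ij}x_ix_j$. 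Moreover $r=\rk(A)$ and $S=\{\pm Z(\bx):\bx\in\rs(A)\}$. An alternating binary matrix has even rank, so $r>2$ forces $r\ge 4$.

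\emph{The conjugate.} Choose $\bc$ with $\bw:=\bc A\neq 0$. Then $X(\bc)CX(\bc)=\pm Z(\bw)\,C$, since $q(\bx+\bc)+q(\bx)=q(\bc)+\bx A\bc^\top$. Write $\epsilon=\pm1$ for this sign. Plugging into~\eqref{e-mainconj} gives
\begin{equation*}
\widehat{C}X(\bc)\widehat{C}^\dagger = X(\bc)\,G,\qquad G=\tfrac12\big(I+\epsilon Z(\bw)\big)-\tfrac{i}{2}\big(I-\epsilon Z(\bw)\big)C .
\end{equation*}
Pointwise, $G$ equals $1$ where $\epsilon(-1)^{\bx\bw^\top}=1$ and equals $-i(-1)^{q(\bx)}$ elsewhere. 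By Theorem~\ref{T-closed} it suffices to show that $G\notin\cC^{(2)}$. Suppose to the contrary that $G$ is Clifford; it is diagonal, so Lemma~\ref{L-trace} (in its phase-insensitive form) says all nonzero Pauli coefficients of $G$ have the same modulus $2^{-r'/2}$.

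\emph{Coefficient comparison.} Since $Z(\bw)\in S$, the product $(I-\epsilon Z(\bw))C$ has Pauli expansion supported in $S$. Its coefficients are $2^{-r/2}(\a_E-\epsilon\a_{Z(\bw)E})$, each in $\{0,\pm 2^{1-r/2}\}$ and all real.
\begin{itemize}
\item At $E=I$, the coefficient of $G$ is $\tfrac12$ plus a purely imaginary number, so its modulus is at least $\tfrac12$. This forces $2^{-r'/2}\ge \tfrac12$.
\item For $E\notin\{I,Z(\bw)\}$, the coefficient of $G$ comes only from the second term, so its modulus is at most $2^{-r/2}\le \tfrac14$.
\end{itemize}
So it remains to exhibit some $E\in S\setminus\{I,Z(\bw)\}$ with $\a_E\ne\epsilon\a_{Z(\bw)E}$; that gives a nonzero coefficient of modulus $2^{-r/2}<\tfrac12$, a contradiction.

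\emph{Main obstacle.} The step needing care is this existence claim, i.e.\ that $(I-\epsilon Z(\bw))C$ has support beyond $\{I,Z(\bw)\}$. The plan is to note that $(I-\epsilon Z(\bw))C$ is twice the restriction of $C$ to the affine hyperplane $\{\bx:\epsilon(-1)^{\bx\bw^\top}=-1\}$. If its support lay in $\{I,Z(\bw)\}$, then $q$ would be constant on that hyperplane. But $q$ restricted to a hyperplane of $\F_2^n$ is a quadratic form of rank at least $r-2\ge 2$, hence nonconstant. It is exactly here that $r>2$ is used: when $r=2$ the restriction can be constant, in agreement with Theorem~\ref{T-gen-climb}. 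This contradiction shows $G\notin\cC^{(2)}$, and therefore $\widehat{C}\notin\cC^{(3)}$.
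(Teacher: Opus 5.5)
Your argument is correct in substance but follows a different route from the paper. The paper never diagonalizes and never uses hyperbolicity. For an arbitrary Hermitian Clifford $C$ it takes any $\tilde E$ with $C\tilde E\neq\tilde EC$, and trace identities show the coefficient of $\widehat C\tilde E\widehat C^\dagger$ at $\tilde E$ is exactly $\tfrac12$. It then takes $\bar E$ anticommuting with $\tilde E$; the coefficient there is $-i$ times the coefficient of $C$ at $\bar E\tilde E$, so it has modulus $0$ or $2^{-r/2}<\tfrac12$. The degenerate case $\bar E=\pm C\tilde EC$ is sent to Theorem~\ref{T-counter}, and Lemma~\ref{L-trace} is applied to the Hermitian gate $\widehat C\tilde E\widehat C^\dagger$. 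You instead diagonalize, fix the explicit witness $X(\bc)$ with $\bc A\neq 0$, write the conjugate as $X(\bc)G$ with $G$ known pointwise, and compare the coefficient at $I$ against the others. Your version gives an explicit witness and an explicit conjugate. The paper's is shorter, basis-free, and also covers non-hyperbolic $F_C$ (including odd $r$) without splitting off that case. Since $G$ itself is not Hermitian, apply Lemma~\ref{L-trace} to the Hermitian gate $X(\bc)G$ rather than invoking a ``phase-insensitive'' form; its coefficients have the same moduli as those of $G$.

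The reduction needs an extra step. Conjugating so that $\Res(F_C)$ becomes $Z$-type only forces $F_C$ into the shape~\eqref{e-sym-symp}. The gate is then $\lambda X(\bc_0)Z(\bd)D_A$, and the Pauli factor may have an $X$-part. For example, $X_5\,{\sf CZ}_{1,2}{\sf CZ}_{3,4}$ is Hermitian with $r=4$ and $Z$-type residue, but it is not diagonal. The paper's proof of Theorem~\ref{T-gen-climb} glosses over the same point. Add that Hermiticity forces $\bc_0A=\mathbf 0$. Then either conjugate further (a CNOT circuit sending $\bc_0$ to $\mathbf e_n$, then a single-qubit Clifford on qubit $n$) to diagonalize, or run your computation directly. $X(\bc_0)$ commutes with $X(\bc)$ and with $Z(\mathbf w)$, since $\bc_0\cdot\mathbf w=\bc A\bc_0^\top=0$, so the same $G$ arises. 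Even for diagonal $C$, the sign and linear factor $Z(\bd)$ cannot be dropped. The support is in general a coset of $\{Z(\bx):\bx\in\rs(A)\}$, not that subgroup; for instance $(-1)^{x_1x_2+x_1x_3+x_2x_3}=\tfrac12(Z_1+Z_2+Z_3-Z_1Z_2Z_3)$. This is harmless, since you only use invariance of the support under $Z(\mathbf w)$.

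Your ``main obstacle'' is much easier than the hyperplane argument, and your remark about $r=2$ is incorrect. Expanding $X(\bc)CX(\bc)=\epsilon Z(\mathbf w)C$ in the $Z$-basis gives $\alpha_{Z(\mathbf w)Z(\mathbf u)}=\epsilon(-1)^{\mathbf u\cdot\bc}\alpha_{Z(\mathbf u)}$. Hence the coefficient of $G$ at $Z(\mathbf u)$ is nonzero exactly when $Z(\mathbf u)$ lies in the support and $\mathbf u\cdot\bc=1$. Such $\mathbf u$ exists because $C$ does not commute with $X(\bc)$. It lies outside $\{\mathbf 0,\mathbf w\}$ because $\mathbf w\cdot\bc=\bc A\bc^\top=0$ (zero diagonal). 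This is precisely the paper's $\bar E$, up to phase $Z(\mathbf u)X(\bc)$. In particular, $C$ is never constant on that affine hyperplane, for any $r$. When $r=2$ the only thing that fails is the strict inequality $2^{-r/2}<\tfrac12$, which is why there is no conflict with Theorem~\ref{T-gen-climb}. Your hyperplane argument remains valid for $r\ge 4$; it is simply unnecessary.
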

\begin{proof}
We will prove the result by showing that there exist Pauli gates $\tilde{E}$ such that $\widehat{C}\tilde{E}\widehat{C}^\dagger \notin \cC_n^{(2)}$. 
We may assume that $\tilde{E} \neq C\tilde{E}C^\dagger$, since if we have equality for all Pauli gates then $C$ must be the identity.
We will show that 
\begin{equation*}
\widehat{C}\tilde{E}\widehat{C}^\dagger = \frac{1}{2}\left(\tilde{E}+iC\tilde{E}-i\tilde{E}C+C\tilde{E}C\right)\notin \cC_n^{(2)},
\end{equation*}
by showing that it violates Lemma~\ref{L-trace}. 
First, since $\tilde{E}^2 = I$, we have that
\begin{equation*}
{\sf Tr}(\tilde{E}\widehat{C}\tilde{E}\widehat{C}^\dagger) = \frac{1}{2}{\sf Tr}(I) + \frac{i}{2}{\sf Tr}(\tilde{E}C\tilde{E}) - \frac{i}{2}{\sf Tr}(C) + \frac{1}{2}{\sf Tr}(\tilde{E}C\tilde{E}C^\dagger).
\end{equation*}
By the properties of the trace, we have that ${\sf Tr}(\tilde{E}C\tilde{E}) = {\sf Tr}(\tilde{E}^2C) = {\sf Tr}(C)$, and the two middle terms above cancel out.
Additionally, since $\tilde{E}\neq C\tilde{E}C^\dagger$, we have that the product $\tilde{E}C\tilde{E}C^\dagger$ is non-identity Pauli gate and the last term is 0. 
Finally, we have that ${\sf Tr}(\tilde{E}\widehat{C}\tilde{E}\widehat{C}^\dagger) = 2^{n-1}$.

Next, let $\bar{E}$ be a Hermitian Pauli gate that anticommutes with $\tilde{E}$.
Again, we have
\begin{equation*}
{\sf Tr}(\bar{E}\widehat{C}\tilde{E}\widehat{C}^\dagger) = \frac{1}{2}{\sf Tr}(\bar{E}\tilde{E}) + \frac{i}{2}{\sf Tr}(\bar{E}C\tilde{E}) - \frac{i}{2}{\sf Tr}(\bar{E}\tilde{E}C) + \frac{1}{2}{\sf Tr}(\bar{E}C\tilde{E}C^\dagger).
\end{equation*}
Similarly, the first term is 0. For the last term, if $\bar{E} = C\tilde{E}C^\dagger$, then $\widehat{C}\tilde{E}\widehat{C}^\dagger\notin \cC^{(3)}_n$ due to Theorem~\ref{T-counter}. So we may assume $\bar{E} \neq C\tilde{E}C^\dagger$ and the last term is also zero.
For the two middle terms this time we have ${\sf Tr}(\bar{E}C\tilde{E}) = {\sf Tr}(\tilde{E}\bar{E}C) = -{\sf Tr}(\bar{E}\tilde{E}C)$.
Thus ${\sf Tr}(\bar{E}\widehat{C}\tilde{E}\widehat{C}^\dagger) = -i{\sf Tr}(\bar{E}\tilde{E}C)$, which from Lemma~\ref{L-trace} equals either 0 (if $\bar{E}\tilde{E}\notin S$) or has magnitude $2^{n-r/2}$ (if $\bar{E}\tilde{E}\in S$). 
Under the assumption that $r>2$, we have that $2^{n-r/2} < 2^{n-1}$.
Since the coefficients of $\widehat{C}\tilde{E}\widehat{C}^\dagger$ in the Pauli expansion have different magnitudes, $\widehat{C}\tilde{E}\widehat{C}^\dagger$ cannot be a Clifford gate due to Lemma~\ref{L-trace}.
\end{proof}
A similar analysis with the one for diagonal Clifford gates can be done for permutation Clifford gates.
Let $C\in \cC^{(2)}_n$ be a permutation Clifford gate. Then there exists a binary invertible matrix $B\in \GL(n)$ that realizes the qubit permutation $\ket{\bx}\mapsto \ket{\bx B}$.
The respective symplectic matrix is 
\begin{equation}\label{e-perm-symp}
F_C = \matrix{B&0\\0&B^{-\top}}, \text{ where } B^{-\top} = (B^{-1})^\top.
\end{equation}
Since a permutation Clifford gate has only real entries, it will be Hermitian if and only if it is symmetric. 
Note that $C^\top$ realizes the qubit permutation $\ket{\bx}\mapsto \ket{\bx B^{-1}}$. Thus, we have that $C = C^\top$ if and only if $B = B^{-1}$.
Note that $F_C$ as in~\eqref{e-perm-symp} with $B = B^{-1}$ is also hyperbolic. 
Indeed, we have that
\begin{align*}
\inners{(\ba,\bb)F_C}{(\ba,\bb)} &= \ba B \bb^\top + \bb B^\top \ba^\top\\
&= \ba B \bb^\top + \ba B \bb^\top\\
& = 0,
\end{align*}
for all $(\ba,\bb)\in \F_2^{2n}$.

\begin{exa}\label{exa-cnots}
Let ${\sf CX}_{i,j}$ denote the control-$X$ gate with qubit $i$ as control and qubit $j$ as target.
Let $C_1 = {\sf CX}_{1,3},\, C_2 = {\sf CX}_{2,4}$, and put $C = C_1\cdot C_2$.
As seen previously, we have $\widehat{C}_1,\, \widehat{C}_2 \in \cC^{(3)}$. 
One can also verify that $\widehat{C}_1\cdot \widehat{C}_2 \in \cC^{(3)}$, while $\widehat{C}\notin\cC^{(3)}$.
We point out here that $\dim(\Res(F_C)) = 4$, so the latter statement also follows from Theorem~\ref{T-converse}. This demonstrates that lifting the product of (even commuting!) gates is not straightforward.
\end{exa}

We end this section by giving the number of diagonal and permutation Clifford gates (up to a Pauli gate) $C$ for which $\widehat{C}$ climbs to the third level.
\begin{cor}
There are $(2^n-1)(2^n-2)/6$ diagonal Clifford gates $C \in \cC_n^{(2)}$ such that $\widehat{C}\in \cC_n^{(3)}$.
\end{cor}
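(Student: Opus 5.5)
The plan is to turn the count into a count of binary matrices, using the characterization already proved. The Hermitian diagonal Clifford gates, up to Pauli gates and phases, are parametrized by symmetric matrices $A\in\Sym(n)$ with all-zero diagonal, via~\eqref{e-sym-symp}. Indeed, the Hermiticity discussion forces the diagonal to vanish. Conversely, two diagonal Cliffords with the same $A$ have the same symplectic matrix, so they differ by a Pauli gate (by the remark on $\cC_n^{(2)}/\cP_n\cong\Sp(2n)$).

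By the remark following~\eqref{e-sym-symp}, any such $F_C$ is automatically hyperbolic. For the diagonal case $\Res(F_C)=\{(\mathbf 0,\bx)\mid \bx\in\rs(A)\}$, as in the proof of Theorem~\ref{T-diag-climb}, so $\dim(\Res(F_C))=\rk(A)$. Theorem~\ref{T-gen-climb} (or Theorem~\ref{T-diag-climb}) together with Theorem~\ref{T-converse} then says that, for $C\neq I$, $\widehat{C}\in\cC_n^{(3)}$ holds exactly when $\rk(A)=2$. I would note that the identity ($r=0$) is trivially excluded from the intended count. Also, an alternating matrix over $\F_2$ has even rank, so $r=1$ cannot occur.

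The task is therefore to count symmetric zero-diagonal (that is, alternating) binary $n\times n$ matrices of rank exactly $2$. I would show these are in bijection with $2$-dimensional subspaces $W\subseteq\F_2^n$, via $A\mapsto\rs(A)$.

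\emph{Surjectivity.} Given $W$ with basis $\ba,\bb$, set $A=\ba^\top\bb+\bb^\top\ba$. This matrix is symmetric and has diagonal entries $2a_ib_i=0$. Its row space is $W$, because $\ba,\bb$ are independent.

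\emph{Injectivity.} This is the only step needing care, and it is the one I expect to be the main obstacle. Any rank-2 alternating $A$ with row space $W$ has the form $A=\sum_{i,j}m_{ij}\,\bu_i^\top\bu_j$, where $\bu_1,\bu_2$ is a basis of $W$ and $M=(m_{ij})$ is an invertible alternating $2\times2$ matrix. To justify this, write $A=U^\top MU$ with $U$ the $2\times n$ matrix with rows $\bu_1,\bu_2$; then $M$ is forced to be symmetric with zero diagonal, using $\bx A\bx^\top=0$ and that $U$ has full row rank. The only invertible alternating $2\times2$ matrix over $\F_2$ is $\matrix{0&1\\1&0}$, so $A$ is uniquely determined by $W$.

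Finally, the number of $2$-dimensional subspaces of $\F_2^n$ is the Gaussian binomial
\[
\frac{(2^n-1)(2^n-2)}{(2^2-1)(2^2-2)}=\frac{(2^n-1)(2^n-2)}{6},
\]
which gives the claimed count.
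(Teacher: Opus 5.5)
Your proposal is correct and follows essentially the same route as the paper: reduce, via Theorems~\ref{T-diag-climb} and~\ref{T-converse}, to counting zero-diagonal symmetric binary matrices $A$ of rank $2$, write these as $\ba^\top\bb+\bb^\top\ba$, and divide the $(2^n-1)(2^n-2)$ ordered pairs by the factor $6=(2^2-1)(2^2-2)$ of ordered bases of the same plane. Your bijection with $2$-dimensional subspaces makes rigorous the paper's ``readily verifiable'' characterization and its over-counting step, and your remarks on $r=0$ and $r=1$ cover edge cases the paper passes over silently.
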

\begin{proof}
We need to count the number of symplectic matrices $F_C$ as in~\eqref{e-sym-symp} such that $\dim(\Res(I+F_C))=2$.
Recall that, in~\eqref{e-sym-symp}, $A$ must be symmetric and have zero diagonal (so that $C$ is Hermitian). 
Thus, this is equivalent to counting the number of $n\times n$ binary symmetric matrices $A$ that have zero diagonal and $\rk(A) = 2$.
It is readily verifiable that these are precisely the matrices of the form $A = \ba^\top\bb + \bb^\top\ba$ where $\ba\neq \bb$ are two nonzero vectors in $\F_2^n$.
This gives $(2^n-1)(2^n-2)$ options.
However, swapping $\ba$ and $\bb$ and replacing either $\ba$ or $\bb$ with $\ba+\bb$ yields the same matrix, and thus we were over-counting by a factor of 6.
\end{proof}
\begin{cor}
There are $(2^n-1)(2^{n-1}-1)$ permutation Clifford gates $C \in \cC_n^{(2)}$ such that $\widehat{C}\in \cC_n^{(3)}$.
\end{cor}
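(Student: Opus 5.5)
The plan is to reduce the count to linear algebra over $\F_2$, using the characterization proved above: by Theorems~\ref{T-gen-climb} and~\ref{T-converse}, a Hermitian Clifford gate $C$ satisfies $\widehat{C}\in\cC_n^{(3)}$ if and only if $F_C$ is hyperbolic and $\dim(\Res(F_C))=2$. For a permutation Clifford gate with matrix $B\in\GL(n)$, we showed above that Hermiticity is equivalent to $B=B^{-1}$, and that in that case $F_C$ as in~\eqref{e-perm-symp} is automatically hyperbolic. So the task is to count involutions $B\in\GL(n)$ with $\dim(\Res(F_C))=2$. Up to a Pauli gate, the permutation gate is determined by $B$, so counting such $B$ counts the gates.

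First I compute the residue space. Since $F_C$ is block diagonal,
\[
I+F_C=\matrix{I+B&0\\0&I+B^{-\top}}=\matrix{I+B&0\\0&I+B^{\top}},
\]
using $B^{-1}=B$. Hence $\Res(F_C)=\rs(I+B)\oplus\rs(I+B^\top)$, and $\dim(\Res(F_C))=2\,\rk(I+B)$, because a matrix and its transpose have the same rank. The condition therefore becomes $\rk(I+B)=1$.

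Next I parametrize these matrices. Write $N=I+B$. Then $B$ is an involution if and only if $B^2=I+N^2=I$ over $\F_2$, that is, $N^2=0$. Invertibility of $B=I+N$ is automatic, since $B^2=I$. A rank-one binary matrix is uniquely $N=\bu^\top\bw$ with $\bu,\bw\in\F_2^n$ nonzero; no scalar ambiguity arises over $\F_2$. Then
\[
N^2=\bu^\top(\bw\bu^\top)\bw,
\]
which vanishes exactly when $\bw\bu^\top=0$.

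Finally I count the pairs. There are $2^n-1$ choices for $\bu\neq\mathbf 0$. For each, $\bw$ must be a nonzero vector in the hyperplane $\bu^\perp$, giving $2^{n-1}-1$ choices. The total is $(2^n-1)(2^{n-1}-1)$.

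The only step needing care is the bijection between gates and matrices $B$, together with the uniqueness of the rank-one factorization $N=\bu^\top\bw$. Both are standard, so I expect no real obstacle: the argument is short once the equivalence $\dim(\Res(F_C))=2\iff\rk(I+B)=1$ is established.
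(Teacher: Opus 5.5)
Your argument is correct and is essentially the paper's proof. Both reduce $\dim(\Res(F_C))=2$ to $\rk(I+B)=1$, write $I+B=\mathbf{u}^\top\mathbf{w}$ with nonzero factors, derive $\mathbf{w}\mathbf{u}^\top=0$, and count $(2^n-1)(2^{n-1}-1)$ pairs. The one difference is where the orthogonality comes from: you get it from $B^2=I$ (i.e.\ $(I+B)^2=0$), while the paper gets it from invertibility via the matrix determinant lemma and remarks that the involution condition is then automatic. You also make explicit two facts the paper leaves implicit: $\dim(\Res(F_C))=2\,\rk(I+B)$, and uniqueness of the rank-one factorization over $\F_2$.

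Like the paper, you tacitly exclude the identity ($B=I$, $\dim(\Res(F_C))=0$). Since $\widehat{I}=\frac{1+i}{\sqrt{2}}I$ trivially lies in $\cC_n^{(3)}$, your ``if and only if'' should strictly read $\dim(\Res(F_C))\le 2$, and the count is of nontrivial gates.
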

\begin{proof}
We need to count the number of symplectic matrices $F_C$ as in~\eqref{e-perm-symp} such that $\dim(\Res(I+F_C))=2$.
Recall that, in~\eqref{e-perm-symp}, $B$ must be invertible and satisfy $B = B^{-1}$.
This is equivalent to counting the number of invertible matrices (along with $B^{-1} = B$, although as we will see, this is not relevant) $B$ such that $\rk(I+B)=1$.
Thus, we are looking for all invertible matrices $B$ such that $I + B = \ba^\top\bb$ where $\ba,\bb\in \F_2^n$ are nonzero.
Since $B$ is invertible and by the Determinant Lemma, we have
\begin{equation*}
1 = {\sf det}(B) = {\sf det}(I+\ba^\top\bb) = 1 + \ba\bb^\top,
\end{equation*}
which imposes the condition $\ba\bb^\top = 0$. 
Thus, we have $2^n-1$ options for one vector, and once that is fixed, there are $2^{n-1}-1$ options (that is, number of orthogonal nonzero vectors with a given vector) for the second one.
\end{proof}

\section{Lifting the third level}\label{Sec5}
In this section, we show that the square root of a controlled Clifford gate climbs to the fourth level provided that the given Clifford gate climbs to the third level (as characterized in Section~\ref{Sec4}).
Note that this is an analogous result (albeit partial) with that of Theorem~\ref{T-diagonal}.
\begin{lem}\label{L-CC}
Let $C\in \cC^{(2)}$ be a Clifford gate such that $C^2 = \pm I$. Then $\sfC(C)\in \cC^{(3)}$.
\end{lem}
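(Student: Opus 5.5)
We have to show that $\sfC(C)=\ket{0}\bra{0}\otimes I+\ket{1}\bra{1}\otimes C$ conjugates every Pauli gate to a Clifford gate. Since $\cC^{(2)}$ is a group, it suffices to check the generators $X_1,Z_1$ on the control qubit and $X_j,Z_j$ on the target qubits. We write the controlled gate in block form $\sfC(C)=\mathrm{diag}(I,C)$ and compute each conjugation directly with $2\times 2$ block matrices, as in the proof of Theorem~\ref{T-CNOT}.

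\textbf{Control-qubit generators.} $Z_1=\mathrm{diag}(I,-I)$ commutes with $\sfC(C)$, so $Z_1$ is fixed. For $X_1$ we get
\begin{equation*}
\sfC(C)X_1\sfC(C)^\dagger=\matrix{0&C^\dagger\\C&0}=(X\otimes I)\cdot\matrix{C&0\\0&C^\dagger}.
\end{equation*}
The right-hand factor is $(\ket{0}\bra{0}\otimes C+\ket{1}\bra{1}\otimes C^\dagger)$. The hypothesis $C^2=\pm I$ gives $C^\dagger=C^{-1}=\pm C$, so this factor equals $(I\otimes C)\cdot(\mathrm{diag}(1,\pm1)\otimes I)$, which is $I\otimes C$ times either $I$ or $Z\otimes I$. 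Hence $\sfC(C)X_1\sfC(C)^\dagger=\pm(X\otimes C)$ or $(X\otimes C)(Z\otimes I)$, a product of Clifford gates, and therefore Clifford. This is the step where the hypothesis is essential: without $C^2=\pm I$ the factor would be $\mathrm{diag}(C,C^\dagger)=(I\otimes C)\,\sfC(C^{-2})$, which is generally not Clifford.

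\textbf{Target-qubit generators.} Let $P$ be a Pauli on the target register. Then
\begin{equation*}
\sfC(C)(I\otimes P)\sfC(C)^\dagger=\mathrm{diag}(P,CPC^\dagger)=(I\otimes P)\cdot\mathrm{diag}(I,P^\dagger CPC^\dagger)=(I\otimes P)\,\sfC(P^\dagger CPC^\dagger).
\end{equation*}
Since $C$ is Clifford, $CPC^\dagger$ is a Pauli gate, so $Q:=P^\dagger CPC^\dagger$ is a Pauli gate. By Corollary~\ref{C-CP}, $\sfC(Q)\in\cC^{(2)}$, and hence the whole expression is Clifford. A possible overall phase in $Q$ is harmless: a controlled scalar is a diagonal phase gate $\mathrm{diag}(1,\lambda)$ on the control qubit with $\lambda\in\{\pm1,\pm i\}$, which is itself Clifford.

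\textbf{Conclusion and main obstacle.} All generators are conjugated into $\cC^{(2)}$, so $\sfC(C)\in\cC^{(3)}$. The only nonroutine point is the $X_1$ computation, namely recognizing that $C^2=\pm I$ is exactly what makes $\mathrm{diag}(C,C^\dagger)$ Clifford. The sign case $C^2=-I$ must be handled carefully, which is done by the extra factor $Z\otimes I$ above.
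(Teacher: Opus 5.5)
Your proof is correct and is essentially the paper's block-matrix argument: the block-diagonal (target-register) case is factored through a controlled Pauli and Corollary~\ref{C-CP}, and $C^2=\pm I$ is used only in the block-off-diagonal case. The only difference is that you first reduce to generators, so the off-diagonal case is just $X_1$, where $C^\dagger=\pm C$ gives $X\otimes C$ or $(X\otimes C)(Z\otimes I)$ directly. This is a slightly cleaner factorization than the paper's $X_1\,\sfC(P_1)\,X_1\,\sfC(\pm CP_1C^\dagger)\,X_1\,(I\otimes C^\dagger)$ for a general off-diagonal Pauli, and it makes the sign in the case $C^2=-I$ explicit.
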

\begin{proof}
A Pauli gate is either block-diagonal or block-off-diagonal and can be written as
\begin{equation*}
P = \matrix{P_1&\\&\pm P_1} \text{\, or \,} P = \matrix{&P_1\\\pm P_1},
\end{equation*}
where $P_1$ is a generic Pauli gate.
In the former case, one easily verifies that 
\begin{equation*}
\sfC(C) \cdot P\cdot[\sfC(C)]^\dagger = X_1\cdot \sfC(P_1)\cdot X_1\cdot \sfC(\pm CP_1C^\dagger).    
\end{equation*}
Since $C\in \cC^{(2)}$, we have that $\pm CP_1C^\dagger \in \cC^{(1)}$. Since $\sfC(P_1), \sfC(\pm CP_1C^\dagger) \in \cC^{(2)}$ (due to Corollary~\ref{C-CP}) and since $\cC^{(2)}$ is a group, we have that $\sfC(C) P [\sfC(C)]^\dagger\in \cC^{(2)}$.

In the latter case, we have that
\begin{align*}
\sfC(C) \cdot P \cdot [\sfC(C)]^\dagger & = \matrix{&P_1C^\dagger\\\pm CP_1&}\\
& = \matrix{P_1&\\&I}\matrix{&I\\\pm CP_1C^\dagger&}\matrix{C^\dagger&\\&C^\dagger}\\
& = X_1\cdot \sfC(P_1)\cdot X_1\cdot \sfC(\pm CP_1C^\dagger)\cdot X_1\cdot I\otimes C^\dagger.
\end{align*}
In the second equation above we have used the assumption that $C^2 = \pm I$.
As before, every gate in the product above is either a Pauli gate or a Clifford gate and thus $\sfC(C) P [\sfC(C)]^\dagger\in \cC^{(2)}$.
\end{proof}
\begin{rem}
The condition $C^2 = \pm I$ (that is, $C$ is either Hermitian or anti-Hermitian) is indeed necessary.
For instance, the gate $HS$ has order 6 and one can easily verify that $\sfC(HS) \notin \cC^{(3)}$.
This is part of a more general phenomenon where a gate of order not a power of 2 cannot be in the Clifford hierarchy~\cite{anderson2025controlledgatescliffordhierarchy}.
We point out here that this argument does not necessarily extend to higher levels of the hierarchy because they are not closed under multiplication.
\end{rem}
\begin{theo}\label{T-C4}
Let $C\in \cC^{(2)}$ be a Hermitian Clifford gate such that $\widehat{C}\in \cC^{(3)}$.
Then $U = \sfC(C)$ satisfies $\widehat{U}\in \cC^{(4)}$.
\end{theo}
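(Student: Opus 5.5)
The key observation is that $U=\sfC(C)=\matrix{I&\\&C}$ is block diagonal, so its square root is block diagonal as well:
\begin{equation*}
\widehat{U} = \frac{I+iU}{\sqrt{2}} = \matrix{\frac{1+i}{\sqrt{2}}I&\\&\widehat{C}} = \frac{1+i}{\sqrt{2}}\,\sfC\!\left(\tfrac{1-i}{\sqrt{2}}\widehat{C}\right).
\end{equation*}
Up to a global phase, $\widehat{U}$ is therefore a controlled version of $\widehat{C}$. By Lemma~\ref{L-CC}, $U\in\cC^{(3)}$, so $U$ is Hermitian and at level~3. We need $\widehat{U}P\widehat{U}^\dagger\in\cC^{(3)}$ for every Pauli $P$. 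As in the proof of Lemma~\ref{L-CC}, we split on whether $P$ is block-diagonal or block-off-diagonal, writing $P_1$ for the Pauli on the target register.

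\emph{Block-diagonal case.} Here $P=\matrix{P_1&\\&\pm P_1}$, i.e.\ $P=I\otimes P_1$ or $Z\otimes P_1$ up to phase. Conjugation gives
\begin{equation*}
\widehat{U}P\widehat{U}^\dagger=\matrix{P_1&\\&\pm \widehat{C}P_1\widehat{C}^\dagger}.
\end{equation*}
Set $Q=\widehat{C}P_1\widehat{C}^\dagger$. Since $\widehat{C}\in\cC^{(3)}$, we have $Q\in\cC^{(2)}$, and $Q^2=P_1^2=\pm I$. Factor the conjugate as
\begin{equation*}
(Z^{\,\cdot}\otimes I)\cdot(I\otimes P_1)\cdot\sfC(P_1^\dagger Q).
\end{equation*}
Then $P_1^\dagger Q = P_1^\dagger\widehat{C}P_1\widehat{C}^\dagger$ is a product of Cliffords. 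It would lie in $\cC^{(3)}$ as a controlled gate if it squares to $\pm I$, which is the condition Lemma~\ref{L-CC} needs. That condition need not hold, so this route is problematic.

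A cleaner route is to use the structure of $C$ itself. By Theorem~\ref{T-converse}, $\dim\Res(F_C)=2$ (assuming $C\neq I$), and by the proof of Theorem~\ref{T-gen-climb} we have $C=C_1^\dagger D C_1$ with $D=\frac12(I+Z_1+Z_2-Z_1Z_2)$ for a Clifford $C_1$. In fact $D$ is $\sfC(Z)$ on qubits 1 and 2 up to Pauli and phase. So $U=(I\otimes C_1)^\dagger\,\sfC(D)\,(I\otimes C_1)$, and since $I\otimes C_1$ is Clifford, Theorem~\ref{T-closed} reduces everything to $U'=\sfC(D)$. Up to Pauli factors this is $\sfC^{(2)}(Z)$, possibly multiplied by controlled-$Z$'s and a $Z$ on the control.

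By Theorem~\ref{T-diagonal}, $\sfC^{(2)}(Z)\in\cC^{(3)}$ and its square root $\sfC^{(2)}(S^\dagger)$ is in $\cC^{(4)}$. The remaining task is to carry out the reduction honestly. Concretely, $D = \pm Z_1^{a}Z_2^{b}\cdot \sfC(Z)$ on qubits 1 and 2, so $U'$ is a product of commuting diagonal gates $\sfC^{(2)}(Z)$, $\sfC(Z)$ and $Z$ on the qubits involved. Since these factors commute and are Hermitian, $\widehat{U'}$ is, up to phase, the product of the corresponding exponentials. Each factor is a diagonal gate of level at most~4: $\sfC^{(2)}(S^{\pm})\in\cC^{(4)}$ by Theorem~\ref{T-diagonal}, $\sfC(S^{\pm})\in\cC^{(3)}$ by Corollary~\ref{C-diag}, and $S^{\pm}\in\cC^{(2)}$. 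By Theorem~\ref{T-closed}(b), diagonal gates of level~4 form a group, so $\widehat{U'}\in\cC^{(4)}$, and conjugating back by the Clifford $I\otimes C_1$ gives $\widehat{U}\in\cC^{(4)}$.

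\emph{Main obstacle.} The hardest step is pinning down the exact form of $D$, including the Pauli and phase ambiguities, so that $\sfC(D)$ really is a Hermitian product of commuting controlled-$Z$-type gates. The exponential splits into a product only because those factors commute. A stray Pauli $X$-type factor would break diagonality, but it can be absorbed because Paulis were fixed to be $Z$-type by the choice of $C_1$. I would first check the identity $\frac12(I+Z_1+Z_2-Z_1Z_2)=Z_1Z_2\,\sfC(Z)$ up to sign and then control that factor.
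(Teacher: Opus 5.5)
\textbf{There is a genuine gap.} The decisive step of your ``cleaner route'' is false: ``since these factors commute and are Hermitian, $\widehat{U'}$ is, up to phase, the product of the corresponding exponentials.'' Commutation lets you split the exponential of a \emph{sum}, not of a \emph{product}. For commuting $A,B$ one has $\widehat{A}\,\widehat{B}=\exp\left(i\frac{\pi}{4}(A+B)\right)$, which is generally not proportional to $\widehat{AB}=\exp\left(i\frac{\pi}{4}AB\right)$. Already $\widehat{Z_1Z_2}\propto\mathrm{diag}(1,-i,-i,1)$, while $\widehat{Z_1}\,\widehat{Z_2}\propto\mathrm{diag}(1,-i,-i,-1)$.

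In your own setting, take $a=1$, $b=0$, i.e.\ $D=Z_1\,\sfC(Z)_{12}$. Then $U'=\mathrm{diag}\big((-1)^{x_0f(\bx)}\big)$ with $f=x_1\oplus x_1x_2=x_1-x_1x_2$. With $\omega=(1+i)/\sqrt{2}$ this gives $\widehat{U'}=\omega\,\mathrm{diag}\big((-i)^{x_0f(\bx)}\big)=\omega\,\sfC(S^\dagger)_{01}\,\sfC^{(2)}(S)_{012}$. Your product is instead proportional to $\sfC(S^\dagger)_{01}\,\sfC^{(2)}(S^\dagger)_{012}$, and the two differ by $\sfC^{(2)}(Z)$.

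The correct computation converts the $\F_2$-valued $f$ to $\Z_4$ via $u\oplus v=u+v-2uv$, i.e.\ $f\equiv\sum_j t_j-2\sum_{j<k}t_jt_k \pmod 4$ over the monomials $t_j$ of $f$. Then $(-i)^{x_0f}$ is a product of gates $\sfC^{(d-1)}(S^{\pm})$ with $d\le 3$ and $\sfC^{(d-1)}(Z)$ with $d\le 4$; for instance, a target factor $Z_3$ produces $\sfC^{(3)}(Z)$. All of these lie in $\cC^{(4)}$ by Theorem~\ref{T-diagonal} and Corollary~\ref{C-diag}, so the conclusion survives via Theorem~\ref{T-closed}(b). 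Your argument, however, does not establish it.

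The reduction also has holes:
\begin{itemize}
\item Theorem~\ref{T-converse} gives only $r\le 2$. You must also treat $r=0$, i.e.\ every Hermitian Pauli $C$, not only $C=I$. This case is easy: $\widehat{U}$ is then Clifford-conjugate to a diagonal gate of level at most $3$.
\item You must exclude $r=1$: a single transvection is not hyperbolic, so Theorem~\ref{T-cliff} applies.
\item The proof of Theorem~\ref{T-gen-climb} yields $C=C_1^\dagger P'DC_1$ with a residual Pauli $P'$ that $C_1$ does not make $Z$-type. Hermiticity only forces $P'$ to commute with $Z_1,Z_2$; for example $C=X_3\,\sfC(Z)_{12}$ is a legitimate input. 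You need a further Clifford on qubits $3,\ldots,n$ to diagonalize.
\end{itemize}

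\textbf{The route you abandoned is the paper's}, and for block-diagonal $P$ it works with the right factorization:
\[
\mathrm{diag}(P_1,\pm Q)=(X\otimes I)\,\sfC(P_1)\,(X\otimes I)\cdot\sfC(\pm Q).
\]
Since $\pm Q$ is a Clifford with $(\pm Q)^2=P_1^2=\pm I$, Lemma~\ref{L-CC} gives $\sfC(\pm Q)\in\cC^{(3)}$, and Corollary~\ref{C-CP} with Theorem~\ref{T-closed}(a) finishes. Your mistake was putting $P_1^\dagger Q$ in the controlled slot instead of $Q$ itself. This route needs no classification of $C$.

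For block-off-diagonal $P$, the paper factors off $X$ on the control. Be aware, though, that the off-diagonal blocks of $\widehat{U}P\widehat{U}^\dagger$ are actually $\omega P_1\widehat{C}^\dagger$ and $\pm\bar{\omega}\widehat{C}P_1$, not $P_1$ and $\pm\widehat{C}P_1\widehat{C}^\dagger$ as displayed there, so that reduction is not immediate. Here a corrected version of your diagonalization argument is genuinely useful. It shows that $\widehat{U}$ itself is Clifford-conjugate to a diagonal gate in $\cC^{(4)}$, which covers all Paulis at once, at the cost of relying on Theorems~\ref{T-converse} and~\ref{T-gen-climb}.
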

\begin{proof}
As in Lemma~\ref{L-CC}, for a block-diagonal Pauli gate, we have that
\begin{align*}
\widehat{U}P\widehat{U}^\dagger  & = \matrix{\widehat{I}&\\&\widehat{C}}\matrix{P_1&\\&\pm P_1} \matrix{\widehat{I}^\dagger&\\&\widehat{C}^\dagger} \\
& = \matrix{P_1&\\&\pm \widehat{C}P_1\widehat{C}^\dagger}\\
& = X_1\cdot \sfC(P_1)\cdot X_1\cdot \sfC(\pm \widehat{C}P_1\widehat{C}^\dagger) \label{e-T63}.
\end{align*}
By assumption, we have that $\widehat{C}\in \cC^{(3)}$, and thus $\pm \widehat{C}P_1\widehat{C}^\dagger \in \cC^{(2)}$.
In addition, $\pm \widehat{C}P_1\widehat{C}^\dagger$ has the same order as $\pm P_1$ (that is, 1, 2, or 4), and by Lemma~\ref{L-CC}, we have that $\sfC(\pm \widehat{C}P_1\widehat{C}^\dagger)\in \cC^{(3)}$.
Since $X_1, \sfC(P_1)\in \cC^{(2)}$ we conclude that $\widehat{U}P\widehat{U}^\dagger\in \cC^{(3)}$.

For block-off-diagonal Pauli gates, the argument reduces to the case of block-diagonal case since
\begin{align*}
\widehat{U}P\widehat{U}^\dagger  & = \matrix{\widehat{I}&\\&\widehat{C}}\matrix{&P_1&\\\pm P_1&} \matrix{\widehat{I}^\dagger&\\&\widehat{C}^\dagger} \\
& = \matrix{&P_1&\\\pm \widehat{C}P_1\widehat{C}^\dagger&}\\
& = \matrix{P_1&\\&\pm \widehat{C}P_1\widehat{C}^\dagger}\cdot X_1.
\end{align*}
This concludes the proof.
\end{proof}
In the next two examples, we illustrate the above by exploring the product of two Toffoli gates as well as a controlled-SWAP gate. 
\begin{exa}\label{exa-toff}
Let ${\sf T}_{i,j,k}$ denote the Toffoli gate with qubits $i,j$ as control and qubit $k$ as target.
\begin{alphalist}
\item Let us first consider $U_1 = {\sf T}_{i,j,k_1},\,U_2 = {\sf T}_{i,j,k_2}\in \cC^{(3)}$, that is, two Toffoli gates with the same control qubits $i,j$ but different target qubits $k_1\neq k_2$. 
We then have $U_1\cdot U_2 = {\sf C}^{(2)}(X_{k_1}X_{k_2})$.
Let us denote $C = {\sf C}(X_{k_1}X_{k_2})$.
Then $C$ satisfies the conditions of Theorem~\ref{T-C4}, namely, $C$ is a Clifford gate such that $C^2 = I$ and $\widehat{C}\in \cC^{(3)}$ due to Theorem~\ref{T-gen-climb} (since, for instance $\dim(\Res(F_C))=2$).
Then, for $U = U_1\cdot U_2 = {\sf C}(C)$, we have that $\widehat{U}\in \cC^{(4)}$ does climb to the fourth level due to Theorem~\ref{T-C4}.
\item If two Toffoli gates do not share any controls nor targets, then, as in Example~\ref{exa-cnots}, their product does not climb to the fourth level.
\item Next, we consider the case where two Toffoli gates share exactly one control. 
Let $U_1 = {\sf T}_{i,j_1,k_1},\,U_2 = {\sf T}_{i,j_2,k_2}$. We then have $U_1\cdot U_2 = {\sf C}({\sf CX}_{j_1,k_1}\cdot {\sf CX}_{j_2,k_2})$, where the first control is the shared control qubit $i$.
Again, the Clifford gate ${\sf CX}_{j_1,k_1}\cdot {\sf CX}_{j_2,k_2}$ does not climb to the third level due to Example~\ref{exa-cnots}, and so Theorem~\ref{T-C4} does not apply.
\end{alphalist}
\end{exa}
\begin{exa}\label{exa-swap2}
Consider the (Hermitian, Clifford) SWAP gate ${\sf S}$. By Lemma~\ref{L-CC}, we have ${\sf C(S)} \in \cC^{(3)}$ and by Example~\ref{exa-swap}, we have $\widehat{\sf S}\in \cC^{(3)}$.
Thus, the conditions of Theorem~\ref{T-C4} are met and for $U = {\sf C(S)}$ we have that $\widehat{U}$ does climb to the fourth level.
\end{exa}
\section{Conclusions and future research}\label{Sec6}
The main goal of this paper is to study the Clifford hierarchy.
Motivated by the fact that the diagonal part of the Clifford hierarchy can be climbed by adding controls and taking square roots, we explore if and when the same can be done for the entire hierarchy.
Very fast one realizes that this is not always possible for every gate, and this opens up a series of avenues for exploration.

In this paper, we primarily focus on gates in the Clifford hierarchy whose square roots climbs one level higher.
In particular, we completely characterize Clifford gates whose square roots climb to the third level.
Additionally, we show that the controlled versions of the aforementioned Clifford gates climb to the fourth level, mimicking the behavior of diagonal gates. 

Based on the work presented here and experimental data (for a small number of qubits) we are cautiously optimistic that our approach will lead to a complete understanding of the fourth level of the Clifford hierarchy as well as shed bright light into the entire hierarchy.
We lay out below some interesting and promising future directions.

\textbf{Lifting the Clifford group.} While we completely characterized Clifford gates that climb to the third level, it would be interesting (and important!) to determine if and when the square root of a Hermitian Clifford gate is again in the hierarchy.
As discussed, basic gates such as the Hadamard gate (see also Example~\ref{exa-hadamard}) are not. 
Along the lines of Theorem~\ref{T-counter}, it would be important to find necessary {\it and} sufficient conditions.
We point out in here that preliminary data shows that square roots of Hermitian Clifford gates (that satisfly the necessary conditions of Theorem~\ref{T-counter}) belong to the fourth level (see also Example~\ref{nonexa}).

\textbf{Lifting the third level.} In Section~\ref{Sec5}, we give some preliminary results in this direction. It would be interesting to explore if and when those results can be extended further.
Here, there are at least two possible ways to branch out.
First, establish conditions for a third level permutation gate to climb to the fourth level.
Examples~\ref{exa-toff} and~\ref{exa-swap2} again shed some light on this. Towards this direction, it would be crucial to use the classification of third level permutations~\cite{HRT25} to obtain sufficient conditions analogous to Theorem~\ref{T-gen-climb}. 
Second, Theorem~\ref{T-C4} highlights third level gates whose controlled versions climb to the fourth level. We plan to explore whether or not other third level gates have this property.

\textbf{Considering $2^m$th roots.} The square roots of Hermitian gates considered in this paper fall under a much more broader category of gates. Namely, let $U$ be a Hermitian gate and denote
\begin{equation}
{\sf sq}_m(U) = {\sf exp}\left(i\frac{\pi}{2^m}U\right) = {\sf cos}\left(\frac{\pi}{2^m}\right)I + i{\sf sin}\left(\frac{\pi}{2^m}\right)U.
\end{equation}
With this notation, we have $\widehat{U} = {\sf sq}_2(U)$, which naturally leads to the following.
\begin{prob}
Let $U\in \cC^{(k)}$ be a Hermitian gate. Under what conditions does ${\sf sq}_m(U)\in \cC^{(k+m-1)}$ hold?
\end{prob}
Preliminary computations show that similar results hold for this general case.
For instance, we have that ${\sf sq}_m({\sf C}^{(k)}(X)) \in \cC^{(k+m)}$ (recall that ${\sf C}^{(k)}(X) \in \cC^{(k+1)})$ and for $U$ being a controlled-SWAP gate we have that ${\sf sq}_m(U) \in \cC^{(m+2)}$ (recall that control-SWAP is a third level gate).
The ultimate goal would be to obtain an analogous result with Theorem~\ref{T-diagonal} that holds for the diagonal part of the hierarchy.
\begin{prob}
Let $U\in \cC^{(k)}$ be a Hermitian gate. Under what conditions does ${\sf sq}_m({\sf C}^{(k)}(U))\in \cC^{(k+m+1)}$ hold?
\end{prob}
\section*{Acknowledgment}
Most of this work was performed during the Research
Experiences for Undergraduates (REU-Site) Program
“Cryptography and Coding Theory at the University of
South Florida,” which ran during Summer 2025
(\href{usf-crypto.org/reu-program/}{usf-crypto.org/reu-program/}).

\bibliography{refs}
\bibliographystyle{abbrv}
\end{document}